\newcommand\vldbavailabilityurl{https://github.com/JdHondt/MS-Index}
\newcommand\vldbpagestyle{empty} 
\pgfplotsset{compat=1.16}
\def\mode{paper}
\theoremstyle{definition}
\newcommand\rev[2]{\protect \textcolor{red}{[#1] #2}}
\newcommand\todo[1]{\protect \textcolor{green}{[TODO] #1}}
\newcommand\cut[1]{}
\newcommand\newcut[1]{\st{#1}}
\newcommand\ody[1]{\protect \textcolor{red}{[ODY] #1}}
\newcommand\jdh[1]{\protect \textcolor{blue}{[ODY] #1}}
\newcommand\tp[1]{\protect \textcolor{orange}{[TP] #1}}
\renewcommand\vec[1]{\bm{#1}}
\newcommand\avec[1]{\tilde{\bm{#1}}} 
\newcommand\knn{$k$-NN\xspace}
\newcommand\q{\vec{Q}}
\newcommand\qlen{|\q|}
\newcommand\name{MS-Index\xspace}
\newcommand\dstree{DSTree*\xspace}
\newcommand\stindex{ST-Index*\xspace}
\newcommand\kvmatch{KV-Match*\xspace}
    \renewcommand\rev[2]{\protect #2}
    \renewcommand\todo[1]{}
    \renewcommand\cut[1]{}
    \renewcommand\newcut[1]{}
    \renewcommand\ody[1]{}
    \renewcommand\jdh[1]{}
    \renewcommand\tp[1]{}
\newcommand{\STAB}[1]{\begin{tabular}{@{}c@{}}#1\end{tabular}}
\renewcommand{\rotcell}[1]{\STAB{\rotatebox[origin=c]{90}{#1}}}
\begin{document}
\title{\name: Fast Top-k Subsequence Search for Multivariate Time Series under Euclidean Distance}

\twocolumn

\author{Jens d'Hondt}
\email{j.e.d.hondt@tue.nl}
\affiliation{%
\institution{Eindhoven University of Technology}
\city{Eindhoven}
\country{the Netherlands}
}

\author{Teun Kortekaas}
\email{teun@jelter.net}
\affiliation{%
\institution{Eindhoven University of Technology}
\city{Eindhoven}
\country{the Netherlands}
}

\author{Odysseas Papapetrou}
\email{o.papapetrou@tue.nl}
\affiliation{%
\institution{Eindhoven University of Technology}
\city{Eindhoven}
\country{the Netherlands}
}

\author{Themis Palpanas}
\email{themis@mi.parisdescartes.fr}
\affiliation{%
\institution{Universit\'e Paris Cit\'e \& IUF}
\city{Paris}
\country{France}
}
\begin{abstract}
Modern applications frequently collect and analyze temporal data in the form of multivariate time series (MTS) -- time series that contain multiple channels. 
A common task in this context is subsequence search, which involves identifying all MTS that contain subsequences highly similar to a query time series.
In practical scenarios, not all channels of an MTS are relevant to every query. 
For instance, airplane sensors may gather data on a plethora of components and subsystems, but only a few of these are relevant to a specific query, such as identifying the cause of a malfunctioning landing gear, or a specific flight maneuver.
Consequently, the relevant query channels are often specified at query time.
In this work, we introduce the \emph{Multivariate Subsequence Index} (\name), a novel algorithm for nearest neighbor MTS subsequence search under Euclidean distance that supports ad-hoc selection of query channels. 
The algorithm is \rev{R1D1}{\textit{exact}} and demonstrates query performance that scales sublinearly to the number of query channels. 
We examine the properties of \name with a thorough experimental evaluation over 34 datasets, and show that it outperforms the state-of-the-art one to two orders of magnitude for both raw and normalized subsequences.
\end{abstract}

\maketitle

\pagestyle{\vldbpagestyle}


\setcounter{page}{1}

\begin{figure}[t]
    \centering
    \begin{tikzpicture}
        
       \begin{groupplot}[
          group style={
             group name=plot,
             group size=2 by 4,
             xlabels at=edge bottom,
             xticklabels at=edge bottom,
             vertical sep=.3cm,
             horizontal sep=1cm
          },
          width=.5\columnwidth,
          height=70pt,
          grid=both,
       ]
       \nextgroupplot[
            width=2.7cm,
             align=center,
             title={Query $\vec{Q}$},
             ylabel={\textcolor{red}{Altitude}},
             xmin=0, xmax=500,
             ymin=0, ymax=16000,
             axis background/.style={fill=gray!20} 
       ]
           \addplot[color=blue, mark=*, smooth] coordinates {
             (0, 10000)
             (250, 5000)
             (500, 14000)
           };
           \coordinate (c11topleft) at (axis cs:0,\pgfkeysvalueof{/pgfplots/ymax});
           \coordinate (c11topright) at (axis cs:500,\pgfkeysvalueof{/pgfplots/ymax});
           \coordinate (c11botleft) at (axis cs:0,\pgfkeysvalueof{/pgfplots/ymin});
           \coordinate (c11botright) at (axis cs:500,\pgfkeysvalueof{/pgfplots/ymin});
      \nextgroupplot[
            width=\columnwidth-2.7cm,
            align=center,
            title={1NN: Plane 287 - 23-04-2001},
            xmin=9500, xmax=11250,
            ymin=0, ymax=30000,
            axis background/.style={fill=gray!20} 
      ]
         \addplot[color=blue, mark=*, smooth] coordinates {
            (9000, 28500)
            (9250, 26500)
            (9500, 22500)
            (9750, 18500)
            (10000, 14500)
            (10250, 10500)
            (10500, 5500)
            (10750, 13500)
            (11000, 18700)
            (11250, 21200)
         };
         \coordinate (c21topleft) at (axis cs:10250,\pgfkeysvalueof{/pgfplots/ymax});
          \coordinate (c21topright) at (axis cs:10750,\pgfkeysvalueof{/pgfplots/ymax});
          \coordinate (c21botleft) at (axis cs:10250,\pgfkeysvalueof{/pgfplots/ymin});
          \coordinate (c21botright) at (axis cs:10750,\pgfkeysvalueof{/pgfplots/ymin});
          
       \nextgroupplot[
             width=2.7cm,
             ylabel={Latitude},
             xmin=0, xmax=500,
             ymin=37.7, ymax=37.85,
       ]
          \addplot[color=blue, mark=*, smooth] coordinates {
            (0, 37.76)
            (250,37.79)
            (500,37.79)
          };
       \nextgroupplot[
         width=\columnwidth-2.7cm,
        xmin=9500, xmax=11250,
         ymin=52.360, ymax=52.375,
         ytick={52.360, 52.370},
         yticklabels={52.36, 52.37},
       ]
          \addplot[color=blue, mark=*, smooth] coordinates {
            (9000, 52.365)
            (9250, 52.368)
            (9500, 52.367)
            (9750, 52.368)
            (10000, 52.37)
            (10250, 52.369)
            (10500, 52.367)
            (10750, 52.365)
            (11000, 52.365)
            (11250, 52.365)
          };
 
      \nextgroupplot[
          width=2.7cm,
          align=center,
          ylabel={\textcolor{red}{Landing}\\\textcolor{red}{Gear}},
          xlabel={Time},
          xmin=0, xmax=500,
          xtick={0, 250, 500}, 
          xticklabels={0, 250, 500},
          ymin=0, ymax=1,
          ytick={0, 1},
          yticklabels={0, 1},
          axis background/.style={fill=gray!20} 
        ]
          \addplot[ycomb,color=blue, mark=*] coordinates {
            (0, 0)
            (125, 1)
            (250, 0)
            (375, 1)
            (500, 0)
          };
         \coordinate (c12topleft) at (axis cs:0,\pgfkeysvalueof{/pgfplots/ymax});
          \coordinate (c12topright) at (axis cs:500,\pgfkeysvalueof{/pgfplots/ymax});
          \coordinate (c12botleft) at (axis cs:0,\pgfkeysvalueof{/pgfplots/ymin});
          \coordinate (c12botright) at (axis cs:500,\pgfkeysvalueof{/pgfplots/ymin});

       \nextgroupplot[
         width=\columnwidth-2.7cm,
            xmin=9500, xmax=11250,
            xlabel={Time},
            xtick={10000, 10500, 11000, 11500},
            xticklabels={10000, 10500, 11000, 11500},
            ymin=0, ymax=1,
            ytick={0, 1},
            yticklabels={0, 1},
            axis background/.style={fill=gray!20} 
       ]
       \addplot[ycomb,color=blue, mark=*] coordinates {
          (9000, 0)
          (9125, 0)
          (9250, 0)
          (9375, 0)
          (9500, 0)
          (9625, 0)
          (9750, 0)
          (9875, 0)
          (10000, 0)
          (10125, 0)
          (10250, 0)
          (10375, 1)
          (10500, 0)
          (10625, 1)
          (10750, 0)
          (10875, 0)
          (11000, 0)
          (11125, 0)
          (11250, 0)
       };
        \coordinate (c22topleft) at (axis cs:10250,\pgfkeysvalueof{/pgfplots/ymax});
        \coordinate (c22topright) at (axis cs:10750,\pgfkeysvalueof{/pgfplots/ymax});
        \coordinate (c22botleft) at (axis cs:10250,\pgfkeysvalueof{/pgfplots/ymin});
        \coordinate (c22botright) at (axis cs:10750,\pgfkeysvalueof{/pgfplots/ymin});
       \end{groupplot}      
      
      \fill[red!70!white, draw=black, fill opacity=0.3] (c11topleft) rectangle (c11topright) rectangle (c11botleft) rectangle (c11botright);
      \fill[red!70!white, draw=black, fill opacity=0.3] (c12topleft) rectangle (c12topright) rectangle (c12botleft) rectangle (c12botright);
      
      \fill[red!70!white, draw=black, fill opacity=0.3] (c21topleft) rectangle (c21topright) rectangle (c21botleft) rectangle (c21botright);
      \fill[red!70!white, draw=black, fill opacity=0.3] (c22topleft) rectangle (c22topright) rectangle (c22botleft) rectangle (c22botright);


      \draw[dashed, line width=1.5pt] (c21topleft) -- (c22botleft);
      \draw[dashed, line width=1.5pt] (c21topright) -- (c22botright);
     \end{tikzpicture}
    \caption{Example query and 1NN for MTS of synthetic airplane data. Altitude and landing gear are the query channels. The highlighted boxes (red) are the considered subsequences.}
    \label{fig:airplane}
 \end{figure}
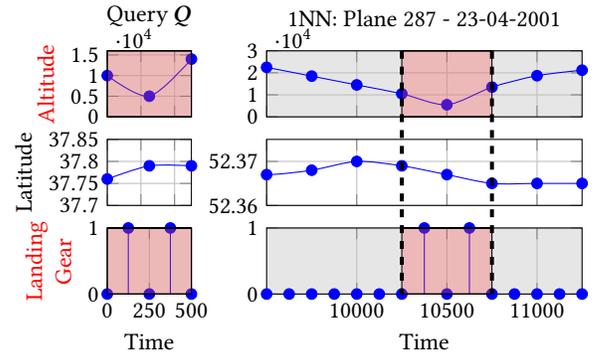

\section{INTRODUCTION}\label{sec:introduction}
Time series are ubiquitous in diverse domains, such as astrophysics, seismology, meteorology, health care, finance, video and audio recordings~\cite{bach-andersen17,huijse14,kashino99,knieling17,paraskevopoulos13}.
Due to advances in sensor technology, the amount of time series being collected has increased dramatically over the last years, particularly in the form of multivariate time series (MTS)~\cite{roddick01}. Each MTS is a collection of time series, often sourced from different sensors that measure different aspects of the same phenomenon or object.
These different time series are referred to as the \textit{channels} of the MTS.
Examples of MTS include health monitoring data of patients in a hospital (e.g., heart rate, blood pressure, and body temperature), climate sensor arrays (e.g., an array measuring the temperature, humidity, and air pressure, at a certain location)~\cite{liess2017}, or motion capture data (e.g., the position and acceleration of different body parts).

A key operation on time series is similarity search, which involves finding the most similar time series to a query time series (also known as its \emph{Nearest Neighbors})~\cite{hydra_exact,hydra_approx}.
Similarity search can be performed as a standalone task~\cite{isax2,dallachiesa14,faloutsos94,keogh04,kondylakis18,linardi18_ulisse,rafiei98,ucr_suite,shieh08}, but is also frequently used as a subroutine in tasks such as outlier detection~\cite{breunig00,dallachiesa14,tuli22}, classification~\cite{bagnall18,hills14,ruiz21,shifaz23,yang04,yekta14}, and clustering~\cite{bagnall04,bonifati23,ding15,kalpakis01,salarpour18}.

To get the most out of the ever-growing datasets, modern-day similarity search algorithms should support queries with high flexibility.
For example, when analyzing patient data, a doctor might need to find similar occurrences of a short-term pattern in a patient's vital signs, to understand the root cause of their condition.
Such a query would require the algorithm to support (a) MTS (multiple sensors), (b) comparison of subsequences rather than  whole time series, and (c) deciding the query channels at query time (only the sensors relating to the patient's relevant vital signs).
Another example requiring such queries is the analysis of sensor data from airplanes~\cite{bhaduri10}, where the user is analyzing the failed landing of an airplane due to a malfunctioning landing gear, and wants to find similar occurrences in historical data to understand the root cause.
In this case, the user might select the period of time of the failed landing (i.e., the left time series in Figure~\ref{fig:airplane}) as well as the relevant channels (highlighted red in the figure) to find similar occurrences in historical data.

While much work has been done to address the challenges that come with similarity search on large datasets of univariate time series (UTS), the work on MTS search is still rudimentary.
Current solutions only support searching for whole time series (whole-matching) instead of subsequences (subsequence search), and only on a fixed, pre-determined set of channels.
In this work, we show that the performance of whole-matching state-of-the-art solutions suffers when these are extended for subsequences.
A further extension of these solutions to handle multivariate data only exacerbates the problem, due to the additional challenges that come with the number of channels. 
In summary, existing algorithms fall short in at least one of the following ways: (a) they natively support only UTS, and their extension to MTS is non-trivial, (b) they are focused on whole matching, and their performance becomes unacceptable when inserting subsequences, or, (c) they rely on severely restrictive assumptions of the user query, such as user-defined thresholds on each channel.

In this work, we propose \emph{\name}, a novel algorithm for k-nearest neighbor (\knn) subsequence search on MTS under Euclidean distance.
\name allows for selection of the query channels at query time, and works for both normalized and un-normalized subsequences.\footnote{Supporting normalized subsequences is more challenging than normalized time series, as it cannot be done through preprocessing the data. Furthermore, while most applications require normalized subsequences for shape matching, some applications require querying for raw subsequences to preserve scale differences~\cite{palpanas_itisa,palpanas_dagstuhl}.}
\name supports \rev{R1D5}{\emph{fixed-length queries}}, i.e., it requires the length of the query to be known at index construction. 
The algorithm is \rev{R1D1}{\emph{exact}}: it always returns the correct result.

\name differs from existing methods in three ways. 
First, it is designed specifically for MTS, leveraging the additional pruning potential that comes with multiple channels. 
Second, it combines index-powered pruning of the search space with efficient distance computation on the remaining candidates through the convolution theorem (MASS)~\cite{conv_theorem,mueen20}.
This is done with the help of an R-tree that indexes Discrete Fourier Transform (DFT) approximated subsequences \emph{on all channels}, and a two-pass search algorithm that prunes candidates based on the lower-bound distance to the query.
\rev{R1D3}{
    To the best of our knowledge, this is the first time an index has been combined with MASS for subsequence search; previous works have only used each of these techniques in isolation~\cite{faloutsos94,mueen20}, arguing for either an index-based or a sequential-scan approach.
}
\rev{R1D3}{
    Third, it utilizes two novel methods to tighten the lower bound distance to the query, further improving the pruning potential of the search.
    These methods are \emph{generic}, meaning that they can be applied to any search solution that uses R-trees and/or DFT approximations.
}
Our key contributions are as follows:
\begin{itemize}[leftmargin=0.4cm,topsep=3pt]
    \item We introduce \name, a novel algorithm for \knn subsequence search on MTS under Euclidean distance (Section~\ref{sec:methodology}).
    \item \rev{R1D3}{We propose a general set of optimizations to improve any search solution using R-trees and/or DFT approximations (Section~\ref{sec:optimizations}). These are shown to improve the performance of \name by a factor of 4, when combined.}
    \item We provide a general approach to extending current solutions for UTS to the multivariate case, which provide baselines for our evaluation (Section~\ref{sec:baseline}).
    \item We conduct a thorough evaluation of \name across 34 datasets, comparing it to a range of baselines, and showing that \name outperforms the state-of-the-art by two orders of magnitude for both raw and normalized subsequences (Section~\ref{sec:evaluation}).
\end{itemize}



\section{PRELIMINARIES}\label{sec:preliminaries}
\subsection{Definitions and Notation}\label{sec:definitions}
\paragraph{Time series.}
A UTS $\vec{t}$ of length $m$ is denoted as $\vec{t} = [\vec{t}_1, \ldots, \vec{t}_m]$, where $\vec{t}_{i}$ is the data point at time $i$. 
An MTS $\vec{T}$ with $c$ channels is denoted as a matrix $\vec{T} = [\vec{t}_1, \ldots, \vec{t}_c]^T$, where $\vec{t}_i$ is the UTS of channel $i$. 
A subsequence of length $s$ starting at timepoint $i$ is denoted as $\vec{T}_i^s = \vec{T}_{*,i:i+s-1}$.
The number of available channels in the dataset is denoted as $c$, and the set of indices to the query channels is denoted as $c_{\q}$, with $c_{\q}\subseteq \{1,\dots,c\}$.


\paragraph{Similarity Search Queries.}
There are two forms of similarity search queries; a k-Nearest-Neighbor (\knn) query and an r-range query (also called threshold query).
A \knn query is defined as finding the $k$ time series in a dataset $\mathcal{D}$ with the smallest distances to a query time series $\q$, under a given distance measure $d$~\cite{hydra_exact}.
In the case of MTS, all time series in $\mathcal{D}$ contain the same $c$ channels.
However, the query time series $\q$ may contain only a subset of these channels $c_{\q}$.
Conversely, an r-range query is defined as finding all time series in $\mathcal{D}$ with a distance to a query time series $\q$ smaller than a given threshold $r$~\cite{hydra_exact}. 
These query types can be further categorized into whole-matching and subsequence search queries. 
In \emph{whole-matching}, we consider the distance between an entire query time series and an entire candidate series~\cite{hydra_exact}. 
All the series involved in the search need to be of the same length.
In \emph{subsequence search}, we consider the distance between an entire query series and all subsequences of a candidate series with the same length as the query~\cite{hydra_exact}.
In this case, the candidate series from which the subsequences are extracted need not have the same length.
\rev{R1D5}{There exist two variants of subsequence search; (a) \emph{fixed-length} search, where the query length $\qlen$ is predetermined and fixed across queries, and (b) \emph{variable-length} search, where the query length is not fixed and can vary between queries~\cite{hydra_exact}.
In this work, we focus on the problem of fixed-length subsequence search, which is a common assumption in the literature~\cite{faloutsos94,dstree,isax2,vlachos03,bhaduri10}.
Variable-length search is left for future work.
}

\paragraph{Distance.}
In line with previous studies on MTS similarity search~\cite{bagnall18,vlachos03,bhaduri10}, we use Euclidean distance (ED) to measure the distance between time series, which is defined over MTS as: 
\begin{equation}\label{eq:euclidean}
    d(\vec{X},\vec{Y}) = ||\vec{X}-\vec{Y}||_2 = \sqrt{\sum_{i \in c_{\vec{X}\vec{Y}}} \sum_{j=1}^{m} (\vec{X}_{i,j} - \vec{Y}_{i,j})^2}
\end{equation}
with $c_{\vec{X}\vec{Y}} = c_{\vec{X}} \cap c_{\vec{Y}}$ the common channels of $\vec{X}$ and $\vec{Y}$, and $m = \min(m_{\vec{X}},m_{\vec{Y}})$ the length of the shortest time series.
\rev{R1D6}{ED was chosen due to its simplicity, efficiency, well-understood properties for univariate data, its natural extension to the multivariate case, and its popularity in the literature~\cite{bagnall18,vlachos03,bhaduri10,sigmod25}.
In fact, while temporal alignment through more complex measures like Dynamic Time Warping (DTW) and Shape-Based Distance (SBD) has proven effective for whole-matching in both univariate and multivariate contexts~\cite{dhondt24,salarpour18,shifaz23,paparrizos15,paparrizos23,sigmod25}, the marginal gain of using these measures over ED for subsequence search is negligible~\cite{keogh_dtwmyths05,keogh_dtwmyths06,shieh08,sigmod25}.
This is because the consideration of all subsequences in time series is effectively a form of temporal alignment, comparing the query with candidate time series under different shifts, naturally correcting for temporal distortions.}
Table~\ref{tab:nomenclature} summarizes the notations used throughout the paper.

\begin{table}[t]
    \centering
    \small
    \caption{Nomenclature}
    \label{tab:nomenclature}
    \begin{tabular}{ll}
        \hline
        $\vec{t}$ & Univariate time series \\
        $\vec{T}$ & Multivariate time series \\
        $|\vec{t}|$ & Length of time series $\vec{T}$ \\
        $\vec{T}_i$ & Channel $i$ of $\vec{T}$  \\
        $\vec{T}_{i,j}$ & Data point $j$ of channel $i$ of $\vec{T}$ \\
        $\vec{T}_i^s$ & Subsequence of $\vec{T}$ of length $s$ starting at index $i$ \\
        $\avec{T}$ & DFT approximation of $\vec{T}$ \\
        $\avec{T}'$ & Flattened DFT approximation of $\vec{T}$ (i.e. feature vector) \\
        $\q$ & Query time series \\
        $c$ & Total number of channels in the dataset \\
        $c_Q$ & The set of channel ids of $\q$ \\
        $n$ & Number of time series in the dataset $\mathcal{D}$ \\
        $m$ & Length of the time series in $\mathcal{D}$ \\
        \hline
    \end{tabular}
\end{table}


\begin{figure}[t]
    \centering
    \includegraphics[width=\columnwidth]{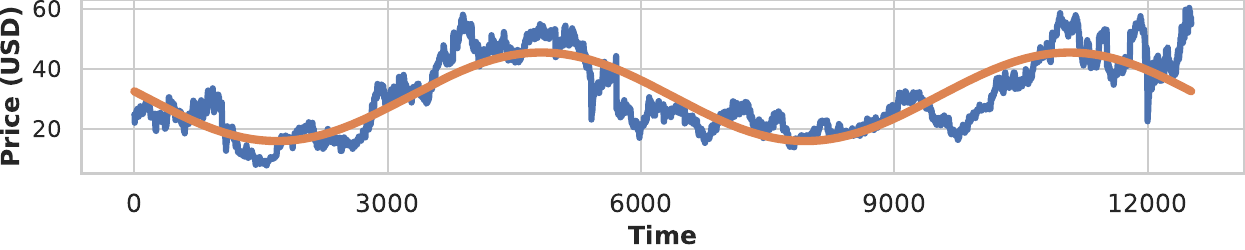}
    \caption{The price of a stock over time (blue), and its reconstruction through its first three DFT coefficients (orange).}
    \label{fig:fourier_example}
\end{figure}

\subsection{DFT Approximation}\label{sec:dft}
A key challenge in similarity search is the high dimensionality of the data, sourcing from the length of the time series.
A common approach to address this challenge is the use of dimensionality reduction, or ``summarization'', techniques that transform the data into a lower-dimensional space where distances can be approximated at a low cost. 
A popular summarization technique for Euclidean distance ($L_2$) is the DFT approximation, which involves performing a DFT on the time series and keeping only a small fraction of the resulting vector~\cite{faloutsos94,mueen10,rafiei98,strang94}.
The DFT decomposes a time series into a sum of sinusoids of different frequencies (also called the \emph{coefficients}), where the amplitudes of the sinusoids represent the importance of the corresponding frequency in the time series (also called the \emph{energy} of the coefficient).
\cut{Particularly, the DFT transforms a time series from the time-domain to the frequency-domain, resulting in a complex vector of the same length, with each value corresponding to a sinusoid of given frequency, in ascending order.}
For most real-world time series, the amplitudes of the high-frequency sinusoids are small, which means that these sinusoids can be discarded without losing much information.
This means that we can accurately approximate a time series using only the first $f$ values of the DFT, resulting in a vector of size $f$ instead of the original size $m$~\cite{faloutsos94}.
The accuracy of such an approximation is demonstrated with an example from the \emph{Stocks} dataset (cf. Section~\ref{sec:evaluation}) in Figure~\ref{fig:fourier_example}.
DFT approximation is popular for estimating the Euclidean distance, as the distance between two DFT-approximated time series is a lower bound on their true Euclidean distance~\cite{mueen10}.
Namely, the Euclidean distance between two time series $\vec{t},\q \in \mathbb{R}^m$ is bounded by their DFT approximations $\avec{t},\avec{Q} \in \mathbb{C}^f$ as~\cite{mueen10}:
\begin{equation}\label{eq:dft_bound_uts}
d(\vec{t},\q) \geq \sqrt{\frac{\sum_{i=1}^f \vert\avec{t}_i - \avec{Q}_i\vert^2}{m}} = \frac{d(\avec{t},\avec{Q})}{\sqrt{m}}
\end{equation}
As the value of $f$ increases (i.e., we account for more coefficients), the DFT bound converges to the exact distance~\cite{mueen10}.
This way, DFT approximation allows for a trade-off between the accuracy of the distance estimation and the dimensionality of the data, which is particularly useful for high-dimensional data such as time series.
In the case of \rev{R2D3}{MTS}, the same concept can be used by summarizing each individual channel, independently.

\subsection{R-tree Construction Techniques}\label{sec:rtree}
An R-tree is a tree-based index for spatial data that groups nearby objects into Minimum Bounding Rectangles (MBRs), which are recursively grouped into larger MBRs~\cite{guttman84,rstar}. 
R-trees can be constructed top-down by splitting a single MBR containing all data points into smaller MBRs using strategies like Quadratic Split~\cite{guttman84}, Linear Split~\cite{guttman84}, or R*-tree topological split~\cite{rstar}. 
Alternatively, bottom-up construction (or \emph{bulk loading}) starts with individual data points as MBRs and merges them into larger MBRs based on a \emph{leaf size} $L$ and partitioning strategy, generally resulting in less overlap but requiring all data to be known in advance.
A popular bulk loading algorithm is the \emph{Sort-Tile-Recursive} (STR) algorithm~\cite{leutenegger97}.
STR first sorts the children based on their coordinates in each dimension (using the middle in case the children are rectangles instead of points).
Then, it recursively partitions the entries into groups of size $\lceil\frac{N}{L}\rceil^{1/d}$, where $N$ is the number of entries to index, $L$ is the desired leaf size, and $d$ is the dimensionality of the tree.
For example, given a 1D space with entries $[1,2,\dots,10]$ and $L=2$, the algorithm partitions the entries into $[[1,2],[3,4],[5,6],[7,8],[9,10]]$, to subsequently create the nodes $[[1,\dots,4], [5,\dots 8], [9,10]]$, etc.
\subsection{Related Work}\label{sec:related} 
We first discuss related work on UTS, followed by recent work on MTS. 
An overview of all related work is given in Table~\ref{tab:related}.

\begin{table}[t]
    \centering
    \footnotesize
    \caption{Overview of related work}
    \label{tab:related}
    \begin{tabular}{ll|llll}
        \cline{3-6} & & \textbf{Name} & \textbf{Ref} & \textbf{Type} & \textbf{Notes} \\ \hline
        \multicolumn{1}{l|}{\multirow{6}{*}{\rotcell{UTS}}} & \multirow{3}{*}{\rotcell{Whole}}       & VA+ file      & \cite{vafile}       & Index         &                                          \\
        \cline{3-6}
        \multicolumn{1}{l|}{}                     &                              & ISAX2+        & \cite{isax2}        & Index         &                                          \\
        \cline{3-6}
        \multicolumn{1}{l|}{}                     &                              & DSTree        & \cite{dstree}       & Index         &                                          \\\cline{2-6}
        \multicolumn{1}{l|}{}                     & \multirow{4}{*}{\rotcell{Sub.}} & ST-index      & \cite{faloutsos94}  & Index         &                                          \\ \cline{3-6}
        \multicolumn{1}{l|}{}                     &                              & KV-match      & \cite{kvmatch}      & Sequential         & \thead[l]{Custom definition of \\normalized subsequences} \\
        \cline{3-6}
        \multicolumn{1}{l|}{}                     &                              & MASS          & \cite{mueen20}      & Sequential    &                                          \\
        \hline
        \multicolumn{1}{l|}{\multirow{5}{*}{\rotcell{MTS}}} & \rotcell{   Whole  }                        & Vlachos       & \cite{vlachos03}    & Index         & \thead[l]{Does not support\\ normalized subsequences} \\ \cline{2-2}
        \cline{2-6}
        \multicolumn{1}{l|}{}                     & \multirow{2}{*}{\rotcell{Sub.}}                  & Bhaduri       & \cite{bhaduri10}    & Index         & \thead[l]{Only supports\\ r-range queries}            \\ \cline{3-6}
        \multicolumn{1}{l|}{}                     &                   & \name (ours)       & & Index         &      \\ \hline
        \end{tabular}
\end{table}

\paragraph{Univariate Time Series (UTS) whole-matching.}
In a large-scale comparison of algorithms for UTS whole-matching by Echihabi, et al.~\cite{hydra_exact}, index-based algorithms generally showed to outperform their sequential scan counterparts.
Furthermore, for small in-memory datasets, VA+ file~\cite{vafile} and iSAX2+~\cite{isax2} excel, while DSTree~\cite{dstree} dominates for larger datasets. 
DSTree uses Extended Adaptive Piecewise Constant Approximation (EAPCA) to estimate distances between query and groups of time series, differentiating them by mean and variance at increasing resolution. 
Whole-matching indices could be extended to support subsequence search by indexing each subsequence individually.
However, this approach would overwhelm the indices when $\qlen \ll |\vec{T}|$, degrading performance as verified in Section~\ref{sec:evaluation}.

\paragraph{UTS subsequence search indices.}
Multiple indices have been proposed for UTS subsequence search.
The ST-index~\cite{faloutsos94} for r-range queries first extracts a DFT approximation of each subsequence of length $\qlen$ in a time series.
These $f$-dimensional vectors then form a trail in the $f$-dimensional space, which is segmented into sub-trails using MBRs.
The MBRs are then indexed in an R*-tree~\cite{rstar}, to enable efficient search.  
More recents improvements through Dual Match and General Match~\cite{gmatch,dmatch} used different window types to reduce index size. However, these extensions inherently restrict the indices to raw subsequences, as supporting normalized subsequences would require a complete redesign.
Our work adopts the idea of indexing DFT approximations in an R-tree but (a) focuses on MTS, (b) removes explicit time series segmentation, (c) leverages convolution theorem for faster distance computation, and (d) employs a different search algorithm to support \knn queries with \emph{ad-hoc} selection of query channels.
Another notable index is KV-match~\cite{kvmatch}, which indexes subsequences within a single time series using the means and variances of disjoint windows over the time series.
The index supports both normalized and raw subsequences, though for normalized subsequences it restricts the search space to subsequences with similar means and variances to the query \emph{before normalization}.
KV-Match can be adapted to our problem setting by (a) building one KV-match per time series and iteratively querying each, and (b) dropping the filter on means and variances before querying normalized subsequences.

Other UTS subsequence indices include L-match~\cite{feng20} and TS-index~\cite{chatzi23}, which we do not consider further because: L-match improves KV-match's indexing time but has higher query time; and TS-index uses Chebyshev distance rather than our Euclidean distance.
As we show in Section~\ref{sec:evaluation}, our work outperforms KV-match, which (by transivity) also suggests how our work is expected to relate to L-match.

\paragraph{UTS subsequence search sequential scans.} 
Mueen's Algorithm for Similarity Search (MASS)~\cite{mueen20} is an exact subsequence search algorithm that computes the distance between a query $\q$ and all subsequences of a time series $\vec{t}$ in time $O(|\vec{t}| \log |\vec{t}|)$ rather than the exhaustive $O(|\vec{t}|\qlen)$ time. 
It does so through the convolution theorem, which states that the cross-correlation (i.e., sliding dot product) of two time series is equivalent to the point-wise multiplication of their Fourier transforms~\cite{conv_theorem}.
Namely, the convolution theorem states that the dot-products $\langle \cdot, \cdot \rangle$ between $\q$ and all subsequences of $\vec{t}$ of length $\qlen$ can be computed through:
\begin{equation}
    [\langle\q,\vec{t}_1^{\qlen}\rangle, \ldots, \langle\q,\vec{t}_{|\vec{t}|-\qlen+1}^{\qlen}\rangle] = \mathcal{F}^{-1}(\mathcal{F}(\q) \otimes \mathcal{F}(\vec{t}))
\label{eq:conv_theorem}
\end{equation}
where $\mathcal{F}$, $\mathcal{F}^{-1}$, and $\otimes$ are the Fourier transform, inverse Fourier transform, and point-wise multiplication, respectively. 
Then, as the Euclidean distance between two vectors $\vec{x}$ and $\vec{y}$ is defined as $d(\vec{x},\vec{y})=\sqrt{||\vec{x}||^2 + ||\vec{y}||^2 - 2\langle\vec{x},\vec{y}\rangle}$, MASS computes the distance between $\q$ and all subsequences of $\vec{t}$ in time $O(|\vec{t}| \log |\vec{t}|)$ using Equation~\ref{eq:conv_theorem} and the sliding squared sums of $\q$ and $\vec{t}$.
MASS can be used as a subsequence search algorithm for MTS, by repeating the distance computation for each channel separately and summing the results to obtain the multivariate Euclidean distance.
We will be using MASS in this work, both as a baseline and as a component of our method.

\paragraph{MTS whole-matching.}
The first index for MTS whole-matching was proposed by Vlachos et al.~\cite{vlachos03}, focusing on r-range queries under DTW and Longest Common Subsequence (LCSS) distance, besides Euclidean distance.
It works by splitting the MTS into MBRs that span across the time, channel, and value axes, and storing those in an R*-tree~\cite{rstar}.
Then, at query time, a \emph{Minimum Bounding Envelope} (MBE) is constructed for the query, which covers all the possible matching areas of the query under warping conditions.
This MBE is further decomposed into MBRs and probed in the R*-tree to efficiently find the candidates.
However, using the method for subsequence search under Euclidean distance essentially reduces to a simplified variant of Dual Match~\cite{dmatch}, which -- as discussed earlier -- prevents the index from supporting normalized subsequences, making it unsuitable for our problem setting.

\paragraph{MTS subsequence search.}
\rev{R1D7}{
    Recently, a novel algorithm for \textit{variable-length} MTS subsequence search was proposed named MULISSE~\cite{mulisse}, as a multivariate extension of the state-of-the-art UTS subsequence search algorithm ULISSE~\cite{linardi18_ulisse}.
    The method extends iSAX2+~\cite{isax2} representations to multiple channels and introduces channel-aware node splitting for improved pruning.
    To accommodate for the vast search space of variable-length subsequence search, the ULISSE index (and thus MULISSE) uses a lightweight design that sacrifices pruning power for a reduced index size.
    MULISSE can be applied to fixed-length search by restricting the query length range to a single value, as we show in Section~\ref{sec:evaluation}.
}

To the best of our knowledge, the only solution for \textit{fixed-length} MTS subsequence search is by Bhaduri, et al~\cite{bhaduri10}.
The method answers \emph{r-range} queries under Euclidean distance by indexing subsequences per channel based on distances to univariate reference points.
However, its reliance on channel-level thresholds prevents support for \knn queries, which require simultaneous querying of all channels.
Adapting the method to our problem setting would require non-trivial modifications likely to degrade its performance, so we do not consider it further in our work.

\paragraph{Summary.}
Concluding, literature includes a multitude of similarity search solutions for UTS and MTS data under different query types and solution designs.
However, only few of the existing solutions can be directly applied/extended to our problem setting.
Namely, all MTS solutions either (a) do not support \knn queries~\cite{bhaduri10} or querying on normalized subsequences~\cite{vlachos03}, \rev{R1D7}{or (b) optimize for index size rather than query performance (MULISSE~\cite{mulisse}).}
There are UTS solutions that could be extended to MTS, but those extensions are not expected to scale well.
This is either because (a) they are not build for subsequence search (\emph{DSTree}), (b) they are sequential scan algorithms (\emph{MASS}), or (c) they are simply designed for UTS (\emph{ST-index}).
\section{MS-INDEX}\label{sec:methodology}
MTS subsequence search involves two key challenges. 
First, the number of subsequences in the dataset grows rapidly, even for fairly small datasets. 
For a dataset with $n$ time series, each of length $m$, an exhaustive search would require computing the Euclidean distance between all $n*(m-\qlen+1)$ subsequences and the query $\vec{Q}$, with each computation costing $|c_{\vec{Q}}|*\qlen$ time.
Second, the requirement to support ad-hoc selection of query channels precludes the use of summarizations that involve merging channels, such as PCA~\cite{yang04}.

Our solution, named \name, addresses these challenges by combining the pruning potential of an index with the efficiency of the MASS algorithm for exact distance computation (cf. Section~\ref{sec:related}). 
The solution involves three key steps (cf. Figure~\ref{fig:summarization}): (a) summarizing the MTS subsequences through DFT approximations, (b) building an R-tree index on these approximations,
\footnote{\rev{R1D1}{We would like to stress that, despite the use of DFTs, which is an approximation technique, \name remains an \textit{exact} algorithm. 
Approximation only influences the bounding efficiency and pruning power (i.e., the speed) of \name, and not the accuracy or completeness of its results.
Specifically, after applying dimensionality reduction, the distances between objects in the R-tree are lower-bounds on the actual distances. This potentially leads to false positives, but never false negatives as we query the R-tree with a threshold that is an upper bound on the \emph{actual} \knn distance (i.e., without  dimensionality reduction). We formally prove this in Lemma~\ref{lem:correctness}.}} and (c) at query time, using the index to heavily prune the candidate subsequences, followed by an exact distance computation on the remaining candidates using MASS.
Through the index, \name is able to prune over 99\% of the candidate subsequences for a variety of datasets, at a very low cost, thereby achieving a speedup up to 100x compared to the state-of-the-art. 
We also propose a number of optimizations to further improve the efficiency of the search process (Section~\ref{sec:optimizations}).
These novel optimizations are interesting in their own right, as they can be applied to any search algorithm that uses DFT approximations and/or spatial indices, such as~\cite{ciaccia97,mueen10,faloutsos94,rafiei98,vlachos03,vafile}.

We now present the key components of \name, starting with time series summarization (Section~\ref{sec:summarization}) and indexing (Section~\ref{sec:indexing}), followed by the query execution algorithm (Section~\ref{sec:query}), and the optimizations (Section~\ref{sec:optimizations}).

\begin{figure}[t]
    \centering
    \includegraphics[width=1.05\linewidth]{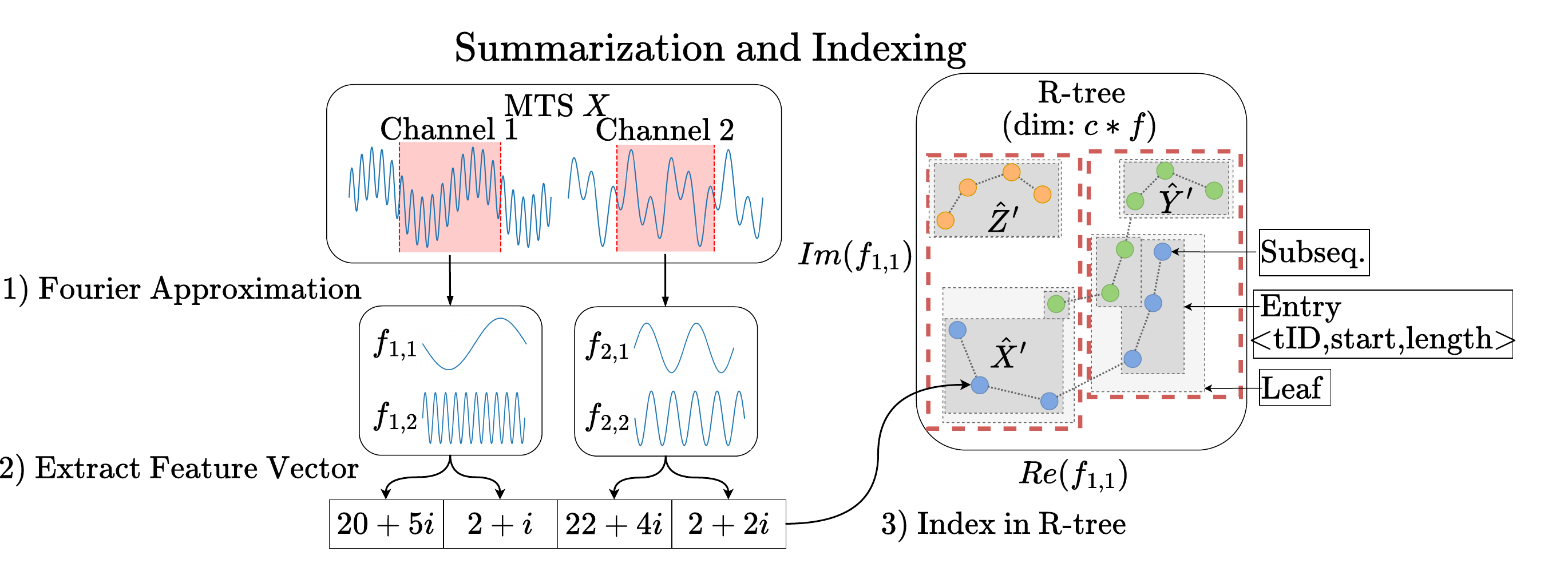}
    \caption{\rev{R1D4}{Summarization and indexing of MTS subsequences. We represent the indexed subsequences of three MTS in the feature space with blue, green, and orange nodes. Time-neighbouring subsequences are connected with lines.}}
    \label{fig:summarization}
\end{figure}

\subsection{Summarizing All Subsequences with DFTs}
\label{sec:summarization}
As discussed in Section~\ref{sec:dft}, a big challenge in working with time series is their length.
Using a subset of DFT coefficients to approximate time series has been extensively used in the literature~\cite{faloutsos94,mueen10,vafile}. 
This approach is effective because it provides a compact representation that enables efficient lower-bounding of distances between time series.
However, our analysis of real-world time series revealed two important observations that can further improve the accuracy of such approximations;

\noindent\textbf{Observation 1:} The \textit{first}-$f$ coefficients are not always the \textit{top}-$f$ coefficients in terms of their contribution to the total energy of the time series; there exist domains where certain high-frequency contributions are also substantial.
This is illustrated in the two plots in Figure~\ref{fig:cumdist_and_query}a, which show the absolute and  cumulative contribution to the total energy and distance across DFT coefficients, extracted from 100 time series with real-world temperature readings at different locations (see Section~\ref{sec:evaluation} for details on the dataset).
Both lines show a 90\% confidence interval over a wide range of measurements (i.e., all 100 time series to form the line for energy, and all 100*99/2 pairwise distances to form the line for total distance). 
The figure shows a sudden jump in the energy contribution at the 62nd coefficient, which indicates that this coefficient has a substantial contribution to the shape of the time series and should thus be included in the approximation.
The top-$f$ most significant coefficients can be derived similarly to the configuration process in other adaptive summarization techniques~\cite{grosse12,paparrizos19grail,spartan}; by computing statistics on a sample of the dataset pre-index construction. 

\begin{figure*}[tbh]
    \begin{subfigure}[t]{0.35\textwidth}
        \centering
        \includegraphics[width=\linewidth]{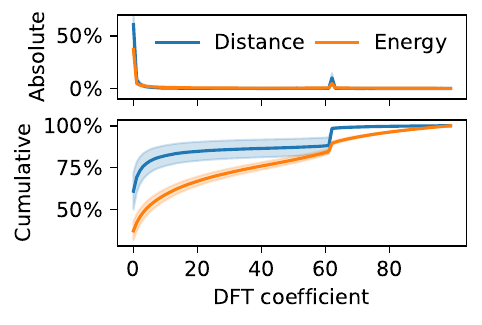}
    \end{subfigure}
    \hfill
    \begin{subfigure}[t]{0.62\textwidth}
        \centering
        \includegraphics[width=\linewidth]{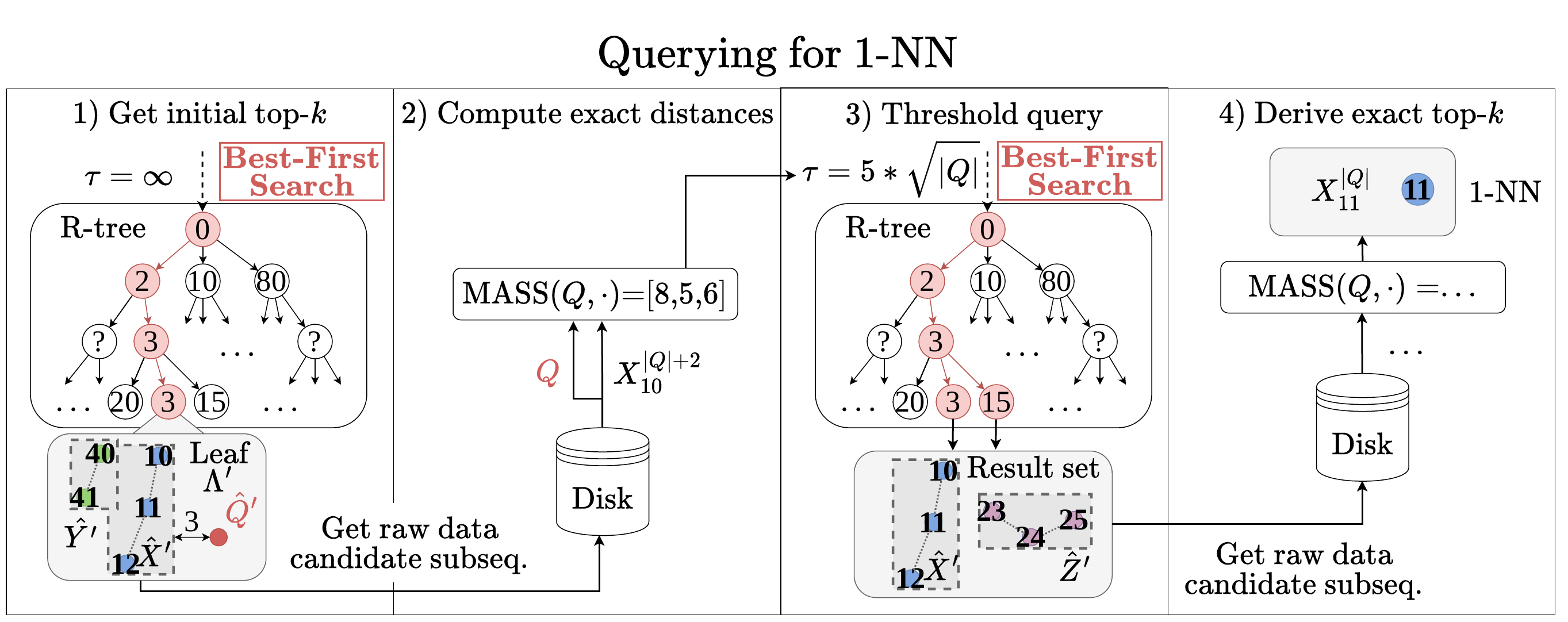}
    \end{subfigure}
    \caption{(a) Cumulative and absolute \% of total distance and energy across DFT coefficients on the temperature channel of a weather dataset; (b) Query execution in \name. Numbers in tree nodes indicate the lower bound distance of the respective MBR to the query.}
    \label{fig:cumdist_and_query}
\end{figure*}

\noindent\textbf{Observation 2:} The contribution of DFT coefficients to the total \emph{distance} between time series is even more significant than their contribution to energy.
Intuitively, this means that the top coefficients are not only larger in scale, but also vary much more between time series. 
This evident from Figure~\ref{fig:cumdist_and_query}a, which shows that the cumulative contribution to distance increases much faster than the contribution to energy, implying that one can already cover $\sim$90\% of the total distance with the top 5 coefficients, even though these only cover roughly $\sim$60\% of the total energy.

\name exploits these observations to create accurate approximations of time series subsequences, adapted to the dataset at hand.
In particular, we summarize the MTS in the dataset as follows.
First, before index construction, we extract a \rev{R2D4}{uniformly random} sample $S$ of subsequences of length $\qlen$ from different time series in the dataset\footnote{\rev{R2D4}{The sample size is a parameter of the algorithm, and is typically set to 100 subsequences. This value was chosen after empirical testing on different datasets, which showed that larger sample sizes did not lead to a noticeable difference in performance.}}, and derive the \emph{average relative distance contribution} (ARDC) of each DFT coefficient and each channel to the Euclidean distance between these subsequences (i.e., the process that generated the plots in Figure~\ref{fig:cumdist_and_query}a).
Then, for each channel in the dataset, the top-$f$ coefficients with the largest ARDC are derived, with $f$ chosen such that the total ARDC of the top-$f$ coefficients is above a certain threshold $d_{\text{target}}$ (e.g., 90\%).
Parameterizing the algorithm on $d_{\text{target}}$ rather than $f$ allows the algorithm to adapt the size of the summarization to the dataset at hand, ensuring a predetermined level of accuracy of the approximations.
Those top-$f$ coefficients are then used to compute the DFT approximations of \emph{all} $\qlen$-length subsequences in the dataset.
Notice that the summation in \rev{R3W2}{Equation~\ref{eq:dft_bound_uts}} can be over any permutation or subset of coefficients; as long as the same coefficients are used for both subsequences the bound still holds.

In the following, with $\avec{T}$ we will denote the matrix of size $c \times f$ that stores the DFT approximations for subsequence $\vec{T}$. 
Also, $\avec{T}'$ will be used to denote the \emph{feature vector} of subsequence $\vec{T}$, which is of length $c*f$ and is constructed by flattening the matrix $\avec{T}$ (step 2 in Figure~\ref{fig:summarization}).
Summarizing a dataset of $n$ time series costs $O(n*m*\qlen*f)$, and leads to a total of $O(n*m)$ feature vectors -- one per subsequence.

\subsection{Indexing of the MTS}\label{sec:indexing}
The next step involves indexing the generated feature vectors in an R-tree (step 3 in Figure~\ref{fig:summarization}).\footnote{
We also experimented with other spatial indexes like KD-trees~\cite{bentley75_kdtree}, but found that the recall stage of our algorithm was the most time-consuming part of our solution, rather than the probing of the index. We thus opted for using an R-tree to enable a more clear comparison with ST-index~\cite{faloutsos94}.
}
Notice, however, that the total number of feature vectors is $O(n*m)$, which can lead to a large and inefficient R-tree when the number of time series $n$ is large, negatively impacting the query performance.

We mitigate this problem as follows. 
First, we build the R-tree in a bottom-up fashion with leaves containing more than one entry (i.e., leaf size $L>1$), to reduce the number of nodes in the tree.
Second, once the leaves are constructed, we revisit them and modify their internal representation, to allow pruning of multiple subsequences (i.e., entries) with a single distance computation.
Particularly, for each leaf that contains two or more entries, we group together all entries that originate from \emph{time-neighbouring subsequences} of the same MTS, i.e., feature vectors corresponding to a series of subsequences that are a single time shift apart (e.g., $\vec{T}_i^{\qlen}$,  $\vec{T}_{i+1}^{\qlen}$,  $\vec{T}_{i+2}^{\qlen}$, $\ldots$).
These groups correspond to a continuous subsequence of the original MTS.  
\rev{R1D2}{For example, assume a leaf $\Lambda$ that includes the following subsequences: $\Lambda=\{\vec{X}_{10}^{\qlen}, \vec{X}_{11}^{\qlen}, \vec{X}_{12}^{\qlen}, \vec{Y}_{40}^{\qlen},\vec{Y}_{41}^{\qlen}\}$, with $X$ and $Y$ being two different MTS.
We compress these five entries into two, representing the relatively larger subsequences $\Lambda'=\{\vec{X}_{10}^{\qlen+2}, \vec{Y}_{3}^{\qlen+1}\}$.
These entries are now represented with Minimal Bounding Rectangles (MBR) rather than with single points in the feature space, with the start and end positions of the continuous subsequence stored in the entry along with the MBR.}

The intuition behind this grouping is that time-neighboring subsequences are typically very similar due to their large overlap~\cite{faloutsos94}.
This means that their feature vectors are also similar, often ending up in the same R-tree leaf.
This allows many subsequences to be represented by compact entries with tight MBRs, which both reduces the number of entries in the R-tree and enables efficient distance computation with MASS.
In our experiments, this compression typically merges 8-50 subsequences into a single entry, depending on the dataset, with a leaf size optimized for query performance.

\rev{R1D2}{
    Consequently, the indexing process of \name is as follows.
    (a) We build an R-tree bottom-up on the feature vectors of all subsequences in the dataset. In our running example with $X$ and $Y$, this step involves creating the leaf node $\Lambda$ along with all other leafs that contain the other subsequences in the dataset. 
    In the recursion of the R-tree, we then create the parent nodes of $\Lambda$, which contain the MBRs of the leafs (and are MBRs themselves), and so on until we reach the root node.
    (b) We then revisit the leaf nodes and group together all time-neighbouring entries (i.e., creating $\Lambda'$) to reduce the index size.
}
\rev{R1D4}{
    After these steps, we end up with an R-tree with each entry storing (a) an MBR covering the indexed feature vectors, (b) the start and end positions of the subsequences in the original MTS, and (c) the MTS identifier.
}
This index is visualized on the right of Figure~\ref{fig:summarization}, where the nodes represent the indexed subsequences of different MTS (differentiated by color), and the edges connect time-neighbouring subsequences of the same MTS.  

\newcut{Lastly, notice that this structure allows new data (either new time series or new values of existing time series) to be indexed in a straightforward manner; one simply (a) extracts the feature vectors of the newly formed subsequences, (b) inserts them into the R-tree following common R-tree insertion procedures, and (c) adds the new entries to time-neighbouring groups if applicable.}

\subsection{Query Execution}\label{sec:query}
Our algorithm probes the index twice.
The first probe is for finding a rough estimate (an upper bound) of the distance to the $k$-th nearest neighbor, whereas the second probe retrieves all entries within this distance, and computes the exact result. \rev{R1D1}{The second probe is necessary to guarantee correctness of the answer}.

Queries are executed as follows.
First, we extract a feature vector $\avec{Q}'$ of the query $\vec{Q}$.
Then, starting with a distance threshold $\tau_k = \infty$, \rev{R3D1}{we perform a best-first search on the R-tree}, to find the $k$ entries (i.e., groups of $\qlen$-length subsequences) with the smallest distance to $\avec{Q}'$.
Since the R-tree relies on the feature vectors to compute all distances, the computed distances are lower bounds (LBs) on the true distances to $\vec{Q}$.
Then, for each retrieved entry, we compute the exact distances between $\vec{Q}$ and the $\qlen$-length subsequences in the entry using MASS, and set the distance threshold $\tau_k$ to the $k$-th smallest distance found in this step.
Finally, we perform a threshold query on the index, using $\tau_k*\sqrt{\qlen}$ as the distance threshold, and compute the exact distances for these entries with MASS to obtain the final \knn.\footnote{\rev{R1D4}{Note that MASS is not executed on the feature vectors, but rather on the original subsequences, to ensure that the distances are exact. This data is retrieved by chasing the pointers stored in the R-tree entries to the original MTS, residing on disk.}}
\rev{R1D4}{
These results are then returned to the user along with the corresponding timestamps of the subsequences, which are derived from the offset and the timestamp of the MTS.}
Since the query $\vec{Q}$ does not necessarily contain all channels (i.e., $c_{\vec{Q}} \subseteq \{1,\dots,c\}$), the R-tree is only queried on the dimensions of the feature space that correspond to the channels in $\vec{Q}$. 
This is natively handled by the R-tree algorithm~\cite{samet05}.

\rev{R1D2}{
    To illustrate this process, consider a query $\vec{Q}$ for which the 1-NN is the subsequence $\vec{X}_{11}^{\qlen}$ in leaf $\Lambda'$ from the previous section.
    The algorithm will return the entry $\vec{X}_{10}^{\qlen+2}$ in the first probe, as it has the smallest lower bounding distance to $\vec{Q}$ (Step 1 in Figure~\ref{fig:cumdist_and_query}b).
    It then computes MASS$(\vec{Q},\vec{X}_{10}^{\qlen+2})$ on the original subsequences, which outputs the distances $[8,5,6]$, and adds $\vec{X}_{11}^{\qlen}$ to the running 1-NN set as it corresponds to the smallest distance of 5 (Step 2).
    Accordingly, the threshold is set to $5*\sqrt{\qlen}$ and the algorithm performs the second probe, retrieving the entries $\vec{Z}_{23}^{\qlen+2}$ and $\vec{X}_{10}^{\qlen+2}$ (Step 3).
    After computing the exact distances for these entries with MASS, the algorithm returns $\vec{X}_{11}^{\qlen}$ as the 1-NN to $\vec{Q}$ (Step 4).
}

We now prove that the query process described above is \rev{R1D1}{guaranteed to return the complete and correct \knn to the query $\vec{Q}$}.
Specifically, we aim to prove the following lemma:
\begin{lemma}\label{lem:correctness}
    \rev{R1D1}{Any indexed subsequence $\vec{T}$ of length $\qlen$ that is part of the \knn of $\vec{Q}$ is guaranteed to be in the set of subsequences returned by \name.}
\end{lemma}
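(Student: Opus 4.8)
The plan is to prove \emph{completeness} (the absence of false negatives): that the second, threshold-based probe of the index retrieves every true \knn of $\q$, after which MASS recomputes their exact distances and they appear in the output. The argument rests on chaining two inequalities in opposite directions. First, the threshold $\tau_k$ obtained after the first probe is an \emph{upper} bound on the true distance from $\q$ to its $k$-th nearest neighbor. Second, by the DFT bound the feature-space distance used inside the R-tree is a \emph{lower} bound (up to the factor $\sqrt{\qlen}$) on the true Euclidean distance. Probing the index with the inflated radius $\tau_k\sqrt{\qlen}$ therefore cannot exclude any subsequence whose true distance is at most $\tau_k$, and in particular cannot exclude any true \knn.

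Concretely, I would proceed in four steps. Let $d_k^\star$ denote the true distance from $\q$ to its $k$-th nearest $\qlen$-length subsequence. \textbf{Step 1:} show $\tau_k \ge d_k^\star$. The first probe returns $k$ entries which jointly contain at least $k$ subsequences, and $\tau_k$ is set to the $k$-th smallest \emph{exact} distance among them; since these are genuine distances to dataset subsequences, at least $k$ subsequences lie within $\tau_k$ of $\q$, forcing $d_k^\star \le \tau_k$. \textbf{Step 2:} rearrange the DFT bound of Equation~\ref{eq:dft_bound_uts} as $d(\avec{T}',\avec{Q}') \le \sqrt{\qlen}\, d(\vec{T},\q)$, where the feature-space distance is measured only on the coordinates belonging to the query channels $c_{\q}$. \textbf{Step 3:} fix any true \knn subsequence $\vec{T}$. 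Since $\vec{T}$ is among the $k$ closest, $d(\vec{T},\q)\le d_k^\star \le \tau_k$, so by Step~2 its feature vector satisfies $d(\avec{T}',\avec{Q}') \le \tau_k\sqrt{\qlen}$. \textbf{Step 4:} the threshold probe of radius $\tau_k\sqrt{\qlen}$ retrieves the R-tree entry enclosing $\avec{T}'$, because the minimum distance from $\avec{Q}'$ to an MBR never exceeds the distance to any feature vector inside it; hence MASS is invoked on $\vec{T}$, its exact distance is evaluated, and $\vec{T}$ is in the returned set.

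The main obstacle is making Step~2 rigorous in the precise setting of \name, where the bound must survive three simultaneous generalizations of Equation~\ref{eq:dft_bound_uts}: (a) the multivariate case, (b) ad-hoc query-channel selection, and (c) the MBR grouping of time-neighbouring subsequences. For (a) and (b), I would decompose the squared distance of Equation~\ref{eq:euclidean} as a sum over the query channels, apply the univariate DFT bound independently to each channel $i \in c_{\q}$ (valid since, as noted after Equation~\ref{eq:dft_bound_uts}, the bound holds for any fixed subset of coefficients), and sum the resulting per-channel inequalities; because both the true distance and the flattened feature-space distance range only over $c_{\q}$, the summed inequality is exactly the bound claimed in Step~2. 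For (c), the MBR argument of Step~4 is what bridges the gap: even though $\vec{T}$ is no longer a single indexed point but one element of a grouped entry, the MBR enclosing that group lower-bounds the query-to-point distance, so the grouped entry is retrieved whenever the bare point would have been.

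Verifying that the per-channel-then-sum decomposition commutes correctly with the flattening into feature vectors $\avec{T}'$ is the one place demanding care; the remaining steps follow directly from the lower-bound property of DFT approximations and the $k$-th-smallest construction of $\tau_k$. I expect no difficulty with the first probe itself, since its only role is to furnish a finite $\tau_k \ge d_k^\star$, and its tightness affects efficiency but never correctness.
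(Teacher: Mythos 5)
Your proposal is correct and follows essentially the same argument as the paper: establish $\tau_k$ as an upper bound on the true $k$-NN distance via the exact distances from the first probe, invoke the per-channel DFT lower bound (the paper's Equation~\ref{eq:dft_bound_mts}) to show any true neighbor's feature vector lies within $\tau_k\sqrt{\qlen}$ of $\avec{Q}'$, and conclude that the range query cannot miss it. Your Step~4 merely makes explicit the MBR-containment point that the paper leaves implicit in the R-tree range-query semantics; otherwise the two proofs coincide.
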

\begin{proof}
    The first probe of the index returns $k$ subsequences of length $s \geq \qlen$, for which the exact distances with the query are computed. As these $k$ subsequences collectively contain at least $k$ subsequences of length $\qlen$, by setting $\tau_k$ as the k'th smallest of these distances we get an upper bound on the distance of the true $k$-th nearest neighbor.
    In the second probe, we perform a range query with threshold  $\tau_k*\sqrt{\qlen}$ on the R-tree index, which contains the feature vectors.
    Based on Equation~\ref{eq:dft_bound_uts}, the distance between any two points in the feature space is a lower bound on the true distance of the subsequences they represent.
    Specifically, the distance between a $\qlen$-length subsequence $\vec{T}$ and the query $\vec{Q}$ is lower-bounded by the distance between their feature vectors $\avec{T}'$ and $\avec{Q}'$ as:
    \begin{equation}\label{eq:dft_bound_mts}
        d(\vec{T},\vec{Q}) \geq \frac{1}{\sqrt{\qlen}} \sqrt{\sum_{i\in c_{\vec{Q}}} d(\avec{T}_i,\avec{Q}_i)^2} = \frac{d(\avec{T}'_{c_{\vec{Q}}},\avec{Q}'_{c_{\vec{Q}}})}{\sqrt{\qlen}}
    \end{equation}
    where $\avec{T}'_{c_{\vec{Q}}}$ and $\avec{Q}'_{c_{\vec{Q}}}$ correspond to the feature vectors of $\vec{T}$ and $\vec{Q}$ limited to the channels of $\vec{Q}$.
    Now, assume that an indexed subsequence $\vec{T}$ has distance with $\vec{Q}$ less than $\tau_k$. 
    Then, by Equation~\ref{eq:dft_bound_mts} we know that $\frac{d(\avec{T}_{c_{\vec{Q}}},\avec{Q}'_{c_{\vec{Q}}})}{\sqrt{\qlen}} \leq d(\vec{T},\vec{Q}) \leq \tau_k$.
    Therefore, by querying an R-tree a threshold $\tau_k * \sqrt{\qlen}$, $\vec{T}$ is guaranteed to be included in the returned set.\footnote{Note that the $\sqrt{\qlen}$ is simply a scaling factor that needs to be accounted when transforming the distances from the feature space to the time domain. It does not weaken the bounds.}
\end{proof}
\rev{R1D1}{
The lemma proves that \name is complete by guaranteeing that all subsequences in the \knn are found. 
Since exact distances are computed for all candidates using MASS (which is exact), and the \knn is derived from these distances, the algorithm is also correct, never returning false positives.
}

\subsection{Optimizations}\label{sec:optimizations}
We now present three optimizations of \name aimed at improving the pruning power of the index and the efficiency of the search process.
\rev{R1D1}{These optimizations do not affect the correctness of the algorithm or the final result -- \emph{the algorithm still remains exact.}}

\medskip\textit{Tightening the DFT bounds.}
As discussed in Section~\ref{sec:summarization}, in most real-world datasets, around 60\%-80\% of the distance between time series is covered by the top 2-5 DFT coefficients.
Formally, this concept is expressed as:
\begin{equation}\label{eq:dft_bound_twoparts}
d^2(\vec{T},\vec{Q}) = (\underbrace{d^2(\avec{T}_f,\avec{Q}_f)}_{\sim 80\%} + \underbrace{d^2(\avec{T}_{f+},\avec{Q}_{f+})}_{\sim 20\%}) / \qlen
\end{equation} 
where $\avec{T}_f$ denotes the DFT approximation of an MTS $\vec{T}$ using the top-$f$ coefficients, and $\avec{T}_{f+}$ denotes the set of $c$ vectors composed of the remaining coefficients.
In the DFT bound of Equation~\ref{eq:dft_bound_mts}, we effectively throw away the second term, which leads to a lower bound of the true distance.
This bound can be computed very efficiently (in $O(f)$), but it may also lead to a weaker pruning in the R-tree, and to unnecessary distance computations for subsequences that are outside the final \knn.

We improve this lower bound such that it approaches the true distance more closely, by introducing a correction term at the distance calculation that approximates (again by lower-bounding) the distance over the remaining $(\qlen-f)$ coefficients, \emph{without actually having to compute these coefficients}. Computing this correction term costs $O(1)$ at query time for each probed R-tree node, and improves the pruning power of the index. 

Namely, during index construction, after computing the DFT approximation for each subsequence $\vec{T}$, we also derive the part of $\vec{T}$ that is not covered by the top-$f$ coefficients, called its \emph{remainder} $\vec{T}_{f+} = [\vec{T}_1 - \textsc{IDFT}(\avec{T}_{f,1}), \dots, \vec{T}_c - \textsc{IDFT}(\avec{T}_{f,c})]$.
This remainder can be computed solely based on the top-$f$ coefficients, and does not require computing all other $(\qlen-f)$ coefficients.
The key observation here is that the distance between the remainders of two time series captures the remaining $\sim$20\% of their distance as shown in Equation~\ref{eq:dft_bound_twoparts}.
Then, Equation~\ref{eq:dft_bound_twoparts} can be rewritten to utilize the remainders:
\begin{equation} \label{eqn:withoutcorrection}
    d^2(\vec{T},\vec{Q}) = \frac{d^2(\avec{T}_f,\avec{Q}_f)}{\qlen} + d^2(\vec{T}_{f+},\vec{Q}_{f+})
\end{equation}
The remainders are of length  $\qlen$, meaning that the cost is as much as computing the full distance on the original data (i.e., $O(|c_{\vec{Q}}|*\qlen)$). To avoid this cost at query time, we precompute the distances of the remainders for each subsequence to a small fixed set of \emph{pivot points} at index time, and store these distances in a map.
Then, during query execution, we do the same for $\vec{Q}_{f+}$, and use the reverse triangle inequality to lower-bound the distance between $\vec{Q}_{f+}$ and the remainders of the indexed subsequences through their distances to pivots.
We refer to this latter bound as the \emph{correction term}. We apply the correction term to Equation~\ref{eqn:withoutcorrection} as follows:
\begin{eqnarray}\label{eqn:withpivot}
0 & \leq (|d(\vec{T}_{f+},\vec{P})-d(\vec{Q}_{f+},\vec{P})|)^2 \leq  d^2(\vec{T}_{f+},\vec{Q}_{f+}) \nonumber \Rightarrow\\
     d^2(\vec{T},\vec{Q}) & \geq \underbrace{d^2(\avec{T}_f,\avec{Q}_f) / \qlen}_{\text{DFT distance}} + \underbrace{(|d(\vec{T}_{f+},\vec{P})-d(\vec{Q}_{f+},\vec{P})|)^2}_{\text{Correction term}}
\end{eqnarray}
where $\vec{P}$ is a pivot.
At index construction time, we derive $k$ pivots by running $k$-means clustering on a sample of the remainders of subsequences in the dataset. Then, during query execution, the algorithm  detects the closest pivot to the query's remainder, and uses this pivot $\vec{P}$ to compute the bound, according to Equation~\ref{eqn:withpivot}.
In Section~\ref{sec:evaluation} we show that this optimization leads to a 2x speedup in query execution.

\begin{figure}[t]
    \centering
    \includegraphics[width=\linewidth]{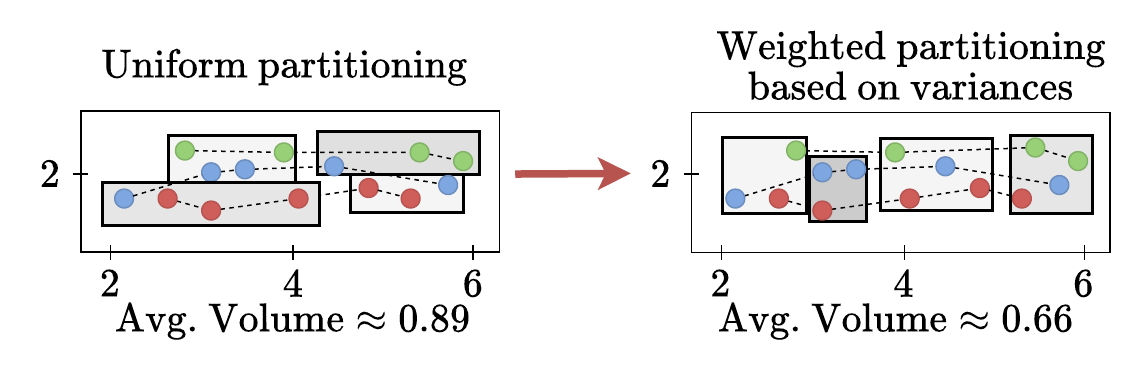}
    \caption{Different partitioning strategies for a 2-dimensional feature space; the STR algorithm (left) and the proposed weighted partitioning (right), that leads to smaller MBRs and to tighter bounds. \cut{The diagrams represent the feature space, and the points of different colors represent the DFT approximations for subsequences of three different MTS (one color per MTS). For illustration, time-neighboring subsequences are connected with a line.}}
    \label{fig:partitioning}
\end{figure}

\paragraph{Tightening the MBRs.}
Pruning efficiency of the R-tree can be further enhanced by reducing the volume of the MBRs in the R-tree nodes.
The R-tree is constructed in a bottom-up fashion using the STR algorithm~\cite{leutenegger97}, briefly presented in Section~\ref{sec:rtree}.
STR's approach of splitting the entries into an equal number of partitions per dimension works well when the data is uniformly distributed at all dimensions, but it can lead to sub-optimal partitioning when the data is skewed or concentrated in certain areas of the space.
To illustrate the reason, consider a simple example where the R-tree is used to index a two-dimensional space, with the values in the first dimension ranging from 2 to 6, and in the second from 1.5 to 2.5 (see Figure~\ref{fig:partitioning}). 
The first dimension will likely be the main contributor of the Euclidean distances between different entries. 
Therefore, if we devote more splits/partitions on the first dimension while constructing the R-tree, the resulting MBRs will be tighter in this dimension, leading to tighter lower distance bounds and to a more aggressive pruning. 
Figure~\ref{fig:partitioning} illustrates this idea.  

DFT approximations, in particular, are prone to have such a heavy concentration of variance in the first few dimensions, as discussed in Section~\ref{sec:summarization}.
We leverage this observation by weighing the dimensions of the R-tree (i.e., the DFT coefficients across channels) based on the variance of their values, which is a good proxy for the contribution of the dimension to the distance between points. 
Namely, before constructing the R-tree, \rev{R2D5}{we use the sample $S$ of subsequences extracted during the summarization step (Section~\ref{sec:summarization}) to estimate the variance $\hat{\sigma}_i^2$ of the distribution of feature vectors across each dimension $i$ of the feature space.}
Next, we compute $\omega_i$ (the weight of dimension $i$) through  softmax normalization of the variances~\cite{softmax}.
Then, given a desired leaf size $L$, the number of splits $p_i$ for a dimension $i$ is determined  by $p_i = \lceil (N/L)^{\omega_i}\rceil$, with $N$ being the total number of entries to index in the whole dataset.
This definition of $p_i$ ensures that the desired leaf size $L$ is reached as $\prod_{i=1}^{c*f*2} \lceil (N/L)^{\hat{\omega_i}} \rceil \approx \frac{N}{L}$.
This way, the algorithm effectively re-distributes the number of partitions across the dimensions, such that it is proportional to the contribution of each dimension to the distance.
In Section~\ref{sec:evaluation} we will show that this optimization leads to a 2-4x speedup in query execution, due to the more aggressive pruning of the index.


\rev{R3D1}{
\paragraph{Distance browsing.}
    Notice that, as the threshold $\tau_k$ used in the second probe is set to the $k$-th smallest \emph{exact} distance of the subsequences returned by the first probe, the second probe is guaranteed to return a superset of the entries returned by the first probe.
    To avoid redundant lower bound distance computations, we preserve the priority queue of entries used in the best-first searches between the two probes, so that the second probe continues from where the first one left off.
    This optimization is commonly used for querying spatial indices, and  is typically referred to as \emph{distance browsing}~\cite{hjaltason99}.
}
\section{EXTENDING EXISTING ALGORITHMS TO THE MULTIVARIATE CASE}
\label{sec:baseline}

\begin{algorithm}[bt]
    \caption{\textsc{UTSBaseline}($\mathcal{I},\q,k$)}
    \label{alg:baseline}
    \SetKwInOut{Input}{Input}
    \SetKwInOut{Output}{Output}
    \Input{A set of channel-level indices $\mathcal{I}$, a query MTS $\q$, a result set size $k$.}
    \Output{The top-$k$ subsequences in $\mathcal{T}$ with the lowest distance to $\q$.}
    $\hat{\mathcal{R}} \leftarrow \{ \}$ \\
    \For(\tcp*[f]{Iterate over indices}){$i \in c_{\q}$}{
        $\hat{\mathcal{R}} \leftarrow \hat{\mathcal{R}} \cup \textsc{Query}(\mathcal{I}_i, \q_{i}, k)$ \tcp*[f]{Query index} \\
    }
    $\hat{\mathcal{R}} \leftarrow \textsc{ExhaustiveTopK}(\hat{\mathcal{R}}, \q)$ \tcp*[f]{Compute est. top-$k$} \\
    \For(\tcp*[f]{Set thresholds}){$i \in c_{\q}$}{
        $\tau_i \leftarrow \underset{S\in \hat{\mathcal{R}}}{\max}~ d^2(S_{i},\q_{i})$\;
    }
    $\mathcal{R} \leftarrow \{ \}$\;
    \For(\tcp*[f]{Re-query}){$i \in c_{\q}$}{
        $\mathcal{R}_i \leftarrow \textsc{Query}(\mathcal{I}_i, \q, \tau_i)$\;
        $\mathcal{R} \leftarrow \mathcal{R} \cup \{\vec{T} | \vec{T} \in \mathcal{R}_i \wedge d^2(\q_{i},\vec{T}_{i}) \leq \tau_i\}$ \\
    }
    \Return $\textsc{ExhTopK}(\mathcal{R}, \q)$\;
\end{algorithm}

As mentioned in Section~\ref{sec:related}, there exist several efficient algorithms that address similarity search for UTS. 
Therefore, a natural question is whether these algorithms can be extended to query MTS, and how they would perform in such a setting.
We will now present a unified extension -- a wrapper algorithm -- that can be used to extend any UTS search algorithm to work with MTS.
We also note that this alternative design approach comes with several deficiencies compared to \name, such as low pruning power when the channels are uncorrelated, as shown in Section~\ref{sec:evaluation}.
Still, it is a useful approach to enable comparison of \name with out-of-the-box extensions of UTS algorithms such as DSTree~\cite{dstree}, ST-index~\cite{faloutsos94}, and KV-match~\cite{kvmatch} on MTS data.

The general approach is based on the well-known Threshold algorithm~\cite{fagin01}, which can be used to derive a global top-$k$ from multiple sorted lists with different attribute values of the same objects, given a monotonic aggregation function to compute the target value upon which the top-$k$ is based.
\newcut{In our case, the objects are the MTS subsequences in the dataset, the values are the squared Euclidean distances to the query on a certain channel, and the aggregation function is the rooted sum of these distances (cf. Equation REF).} 
The approach works as follows (cf., Alg.~\ref{alg:baseline}):
(a) We initialize one index (e.g., one DSTree or one ST-index) per channel.
(b) At query time, we first obtain an initial top-$k$ estimate $\hat{R}_i$ for each channel $i \in c_{\q}$ by querying the corresponding index (lines 2-3).
(c) For each subsequence in the union of the local estimates, we compute the full distance to the query (i.e., using all query channels), constructing an intermediate global top-$k$ $\hat{R}$ (line 4).
(d) We set a distance threshold $\tau_i$ for each channel $i \in c_{\q}$ to the largest \emph{univariate} Euclidean distance in $\hat{R}$ on that channel (lines 5-6).
(e) Finally, we re-query the channel-level indices with their respective thresholds $\tau_i$. 
We compute the full distances to the query for the results, update the global top-k accordingly, and return it as a final result (lines 7-11).
Since any subsequence that belongs to the global top-$k$ must have a distance at most $\tau_i$ on at least one channel $i \in c_{\q}$, \rev{R1D1}{the result of this algorithm is guaranteed to be correct.}
Lastly, MASS is used to speed up the exact distance computations (lines 4 \& 10).
\section{EVALUATION}\label{sec:evaluation}
The purpose of our experiments was threefold: (a) to assess the scalability and efficiency of \name under various query configurations, (b) to compare \name to other methods, and, 
(c) to test the effectiveness of the optimizations proposed in Section~\ref{sec:optimizations}.
\rev{R1D1,R1D12}{Since our method and all baselines are exact, our evaluation only focuses on efficiency; experiments on accuracy would always yield 100\%.}

\smallskip\textit{Compared methods.}
    \rev{R1D7}{
    First of all, we compare \name to \textbf{MULIS\-SE} \cite{mulisse}, applying it to fixed-length search by restricting the supported range of query lengths to $\qlen$.
    This way, MULISSE also exploits the knowledge of a preknown query length, which slightly improved its performance without affecting the quality of the results.
    As no other method natively supports \knn MTS subsequence search, our other baselines comprise of SOTA methods for UTS search extended through Algorithm~\ref{alg:baseline}.\footnote{We add an asterisk to their names to differentiate them from their original versions.}
}
(a) \textbf{\stindex}~\cite{faloutsos94}, a well-known index for subsequence search~\cite{moon02,zhu12},
(b) \textbf{\kvmatch}~\cite{kvmatch}, a SOTA index for subsequence search on single long series, extended with per-MTS indices, and
(c) \textbf{\dstree}~\cite{dstree}, an index for whole-matching~\cite{hydra_exact}, adapted by indexing all subsequences.
We also include two sequential scan baselines:
(a) \textbf{MASS}~\cite{mueen20} applied to all MTS, and
(b) \textbf{Brute-force} exhaustive comparison.
Detailed descriptions of the compared methods can be found in Section~\ref{sec:related}.

\smallskip\textit{Hardware and implementations.}  
All experiments were executed on a server equipped with a 64-core 2.4 GHz AMD Genoa 9654 processor and 128 GB of RAM. 
\rev{R1D7}{The code for MULISSE (C++) was provided by the authors of~\cite{mulisse}, and was executed directly with the original parameters as described in the paper.}
The code for \dstree (Java-11) was provided by the authors of~\cite{dstree}, and was invoked as a subroutine in Alg.~\ref{alg:baseline}.
Due to the lack of publicly available code, all other algorithms were implemented from scratch in Java-11.\footnote{Our code is available at \label{github}\vldbavailabilityurl}
\rev{R2D6}{For a fair comparison, all methods were implemented to operate fully in \emph{main memory} and run in \emph{single-threaded}.}

\smallskip\textit{Datasets.}
We used 32 real-world publicly available datasets from different domains, and a set of synthetic datasets:
(a) \textbf{Stocks.} Daily volumes, opening, closing, high, and low-prices of 28678 stocks over the period Jan. 2, 1987 to Feb. 26, 2021. 
(b) \textbf{Weather.} Segment of the ISD weather dataset~\cite{noaaISD} containing hourly readings of wind speed, sea level pressure, atmospheric temperature, dew point temperature
of 13545 sensors taken between Jan. 1, 2020 and Dec. 31, 2021.
\rev{R2D7}{
    (c) \textbf{Wind.} Sensor output of an active wind turbine sampled every 2 seconds for 10 days~\cite{wind_dataset}, resulting in a single MTS with 432,000 observations and 10 channels covering power output, rotor speed, wind speed, and other wind-related variables.
}
(d) \textbf{Synthetic.} Random walk datasets with 1600 MTS, of 64 channels and 4096 observations each. Following~\cite{hydra_approx,hydra_exact}, these datasets were generated by drawing random numbers from a normal distribution with a mean of 0 and a standard deviation sampled uniformly at random from the interval $[0,\dots,10]$, with starting points sampled uniformly at random from the interval $[0,\dots,100]$. 
(e) \textbf{UEA archive.} Popular benchmark archive for MTS classification consisting of 30 labeled real-world datasets from different domains~\cite{bagnall18}, with varying numbers of MTS, channels, and observations.
Train and test splits for each dataset were merged, to form a single dataset. 

We evaluated both raw and normalized subsequences, focusing on raw results and highlighting normalized results only when they provide additional insights.
The Stocks dataset serves as our primary benchmark due to its size, with other datasets discussed in designated experiments or when they reveal notable patterns.
Table~\ref{tab:datasets} summarizes the default query confirms for each dataset. 
We include more details about the datasets in our code repository.\footref{github}


\begin{table}[tb]
    \centering
    \caption{Default query configurations.}
    \label{tab:datasets}
    \resizebox{\columnwidth}{!}{
    \begin{tabular}{l|lllll}
        \hline
         & \textbf{Stocks} & \textbf{Weather} & \textbf{Synthetic} & \textbf{Wind} & \textbf{UEA (30x)} \\\hline
        $n$ & 2000 & 1300 & 1600 & 1 & all, see~\cite{bagnall18}\\ 
        Avg. $m$ & 5590 & 8692 &  4096 & 432,000 & see~\cite{bagnall18} \\ 
        $\qlen$  & 730 (2 y.) & 1488 (2 mth.)  & 1024 & 1800 (1 h.) & 20\%\\ 
        \#Channels & 5 & 4 & 64 & 10 & see~\cite{bagnall18} \\ 
        \hline
    \end{tabular}
    }
\end{table}

\smallskip\textit{Queries and evaluation metrics.}
As per standard practice~\cite{hydra_exact}, we generated the query workloads for each dataset by randomly selecting $\qlen$-length subsequences from the dataset and adding Gaussian noise with standard deviation $0.1*\sigma_{Q_i}$ on all channels.
As part of sensitivity analysis experiments, we also investigate varying noise levels and query generation from out-of-dataset subsequences in Section~\ref{sec:evaluation:querynoise}.
Unless otherwise stated, we query on all channels of the dataset.
When querying on a subset of channels (as in Section~\ref{sec:evaluation:query_channels}), the query channels are selected uniformly at random from all available channels. 
For each algorithm, we measured: (a) initialization time, e.g., index construction index, (b) query execution time, and, (c) size of the index and/or auxiliary data.
We set a timeout of 12 hours for total execution time, and report median results over 10 runs with 100 queries each.


\subsection{Tuning the Algorithms}\label{sec:parameterization}

\subsubsection{Number of DFT coefficients for \name and \stindex}
Recall from Section~\ref{sec:summarization} that the number of DFT coefficients $f$ is derived based on the distance $d_{\text{target}}$ covered by the top-$f$ coefficients, rather than parameterizing the number of coefficients directly.
For both algorithms, $d_{\text{target}}$ was tuned through a grid search over the values [20\%, 40\%, \ldots 100\%], across all datasets and with queries on both raw and normalized subsequences.
The results showed that a coverage of 60\% was the most robust choice for both algorithms across the datasets, resulting in a query time at most 20\% larger than the optimal choice for each dataset.
Therefore, $d_{\text{target}}$ was set to 60\% for the following experiments.

\begin{table}[t]
    \caption{Average query time (ms) of \name for different leaf size values. Leaf size is expressed as a percentage of the total number of subsequences in each dataset (raw datasets).}
    \label{tab:leafsize}
    \small
    \begin{tabular}{l|lrrrrrrrr}
        \hline
        Leaf size & Insectwingbeat & Stocks & Synthetic & Weather \\
        \hline
        0.0001 \% & \textbf{8.61} & 8.50 & 2.98 & 4.67 \\
        0.01 \% & \textbf{8.61} & 8.37 & \textbf{2.97} & 4.64 \\
        0.05* \% & 9.01 & \textbf{8.16} & \textbf{2.98} & \textbf{4.61} \\
        0.1 \% & 10.79 & 8.36 & 3.01 & 4.69 \\
        1 \% & 22.70 & 11.83 & 3.62 & 5.06 \\
        100 \% & 38.12 & 158.10 & 4.59 & 100.87 \\
        \hline
        \end{tabular}
\end{table}

\subsubsection{Leaf size}\label{sec:evaluation:leaf_size}
\dstree, \stindex, and \name support tuning of the leaf size $L$, i.e., the \emph{maximum} number of entries per leaf. 
Similar to the number of DFT coefficients, we tuned the leaf size for query time through a grid search over values ranging from 0.0001\% to 100\% (i.e., no constraints on the leaf size) of the total number of entries to index.\footnote{\stindex's leaf size was tuned after tuning the number of DFT coefficients, using a leaf size of 0.1\% in the first step.}
Looking at the results in Table~\ref{tab:leafsize}, we see that a leaf size of 0.05\% provides consistently good performance across all datasets for \name.
Therefore, this leaf size was chosen for all experiments.
For \stindex, the optimal leaf size was also 0.05\%.
For  \dstree, the optimal leaf size was found to be 10\%. This contradicts the results of Echihabi et al.~\cite{hydra_exact}, who showed that the optimal leaf size for query time was 0.9\% for time series of size 256. Both 0.9\% and 10\% had a similar query time in our experiments, but the 10\% choice had a substantially lower initialization cost.

\subsubsection{\kvmatch segment size}\label{sec:kvmatch:parameters}
The segment size parameter in \kvmatch controls the number of piecewise means and variances used to index each subsequence.
We found the optimal segment size to be 1 for all datasets.
For raw subsequences, this reduces the index to a single lookup table per channel and time series.
For normalized subsequences, all entries end up in the same bucket due to zero mean and unit variance, effectively reducing \kvmatch to MASS with overhead.

\begin{figure*}[tb]
    \begin{subfigure}[t]{0.19\textwidth}
        \centering
        \includegraphics[width=\textwidth]{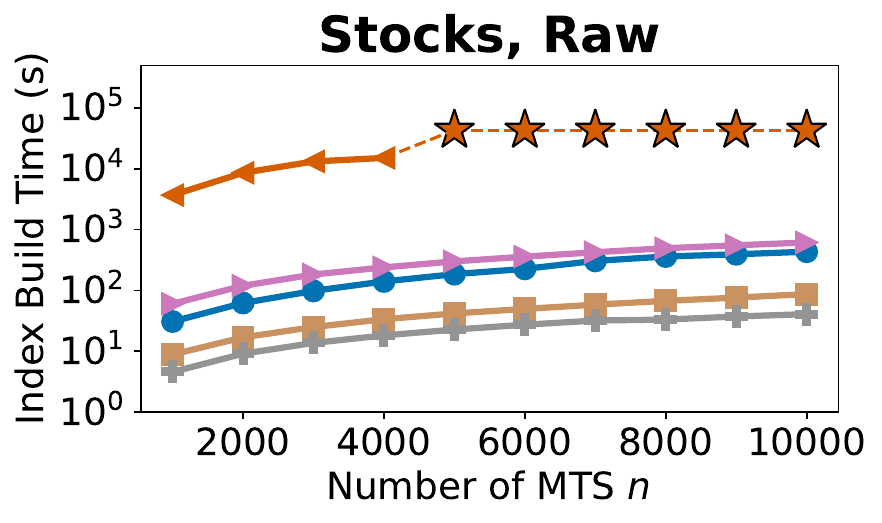}
    \end{subfigure}
    \begin{subfigure}[t]{0.19\textwidth}
        \centering
        \includegraphics[width=\textwidth]{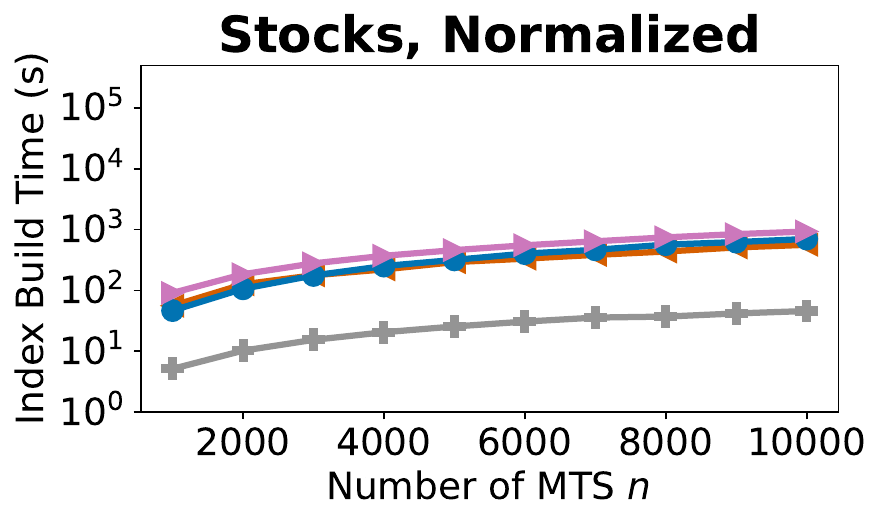}
    \end{subfigure}
    \begin{subfigure}[t]{0.19\textwidth}
        \centering
        \includegraphics[width=\textwidth]{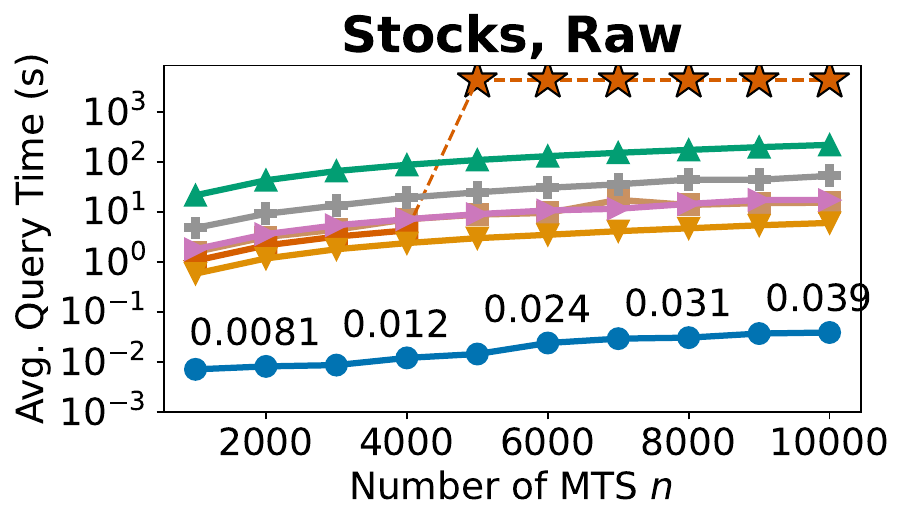}
    \end{subfigure}
    \begin{subfigure}[t]{0.19\textwidth}
        \centering
        \includegraphics[width=\textwidth]{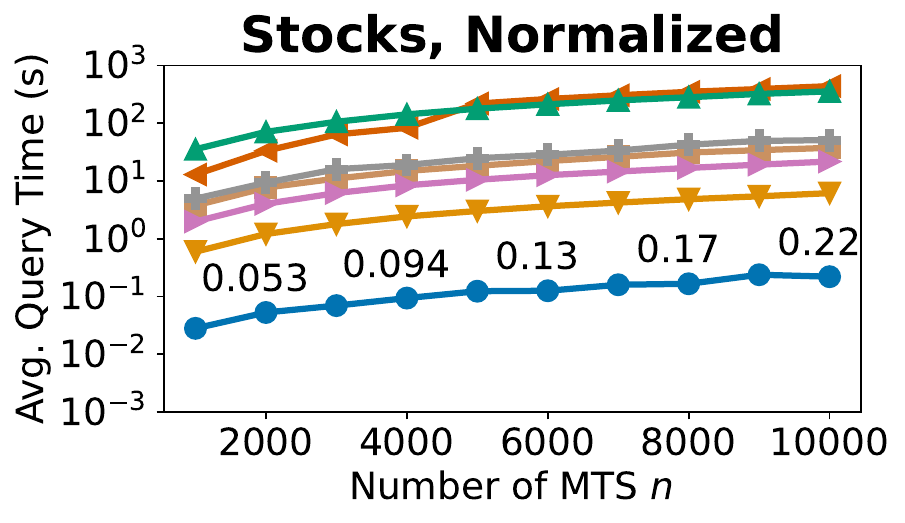}
    \end{subfigure}
    \begin{subfigure}[t]{0.19\textwidth}
        \centering
        \includegraphics[width=\textwidth]{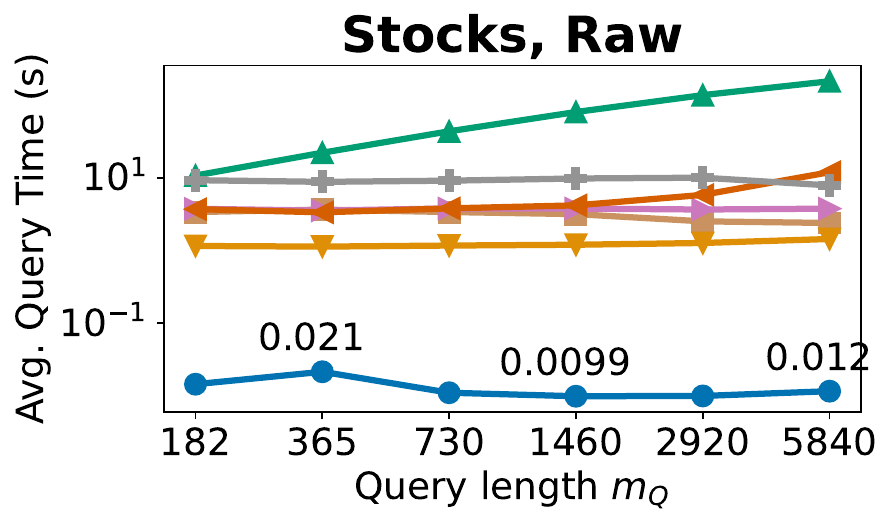}
    \end{subfigure}
    \begin{minipage}[b]{\textwidth}
        \centering
       \includegraphics[width=.7\textwidth]{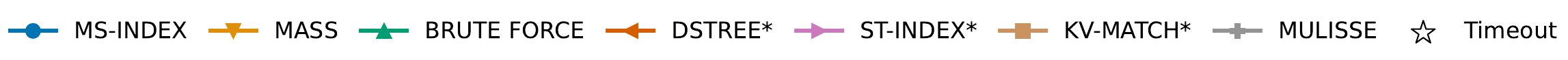}
    \end{minipage}
    \caption{Scalability over number of MTS $n$ on \emph{Stocks}. (a) Initialization time for raw subsequences, and (b) normalized subsequences; (c) Query time for raw subsequences, and (d) normalized subsequences. (e) Query time for raw subsequences with varying query length. All y axes are presented in log scale.}
    \label{fig:row1}
    \centering
    \includegraphics[width=.8\textwidth]{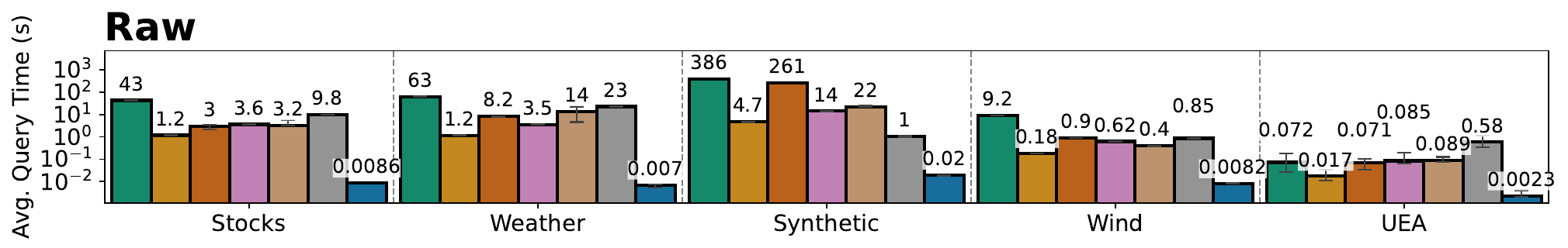}
    \begin{minipage}[b]{\textwidth}
        \centering
       \includegraphics[width=.8\textwidth]{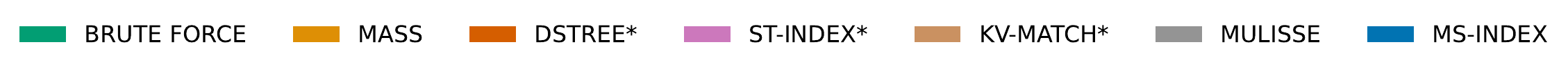}
    \end{minipage}
    \caption{\rev{R2D7}{Query time of algorithms on different datasets. All y axes are presented in log scale.}}
    \label{fig:datasets}
\end{figure*}

\begin{table}[tb]
    \centering
    \caption{Index size (MB) \rev{R2D8}{and the corresponding percentage of the dataset size (MB)} of different algorithms across different Stocks dataset sizes.}
    \label{tab:index_size}
    \setlength{\tabcolsep}{3pt} 
    \resizebox{\columnwidth}{!}{
    \begin{tabular}{l|rrrr|rrrr}
        \hline
         & \multicolumn{4}{c|}{\textbf{Raw}} & \multicolumn{4}{c}{\textbf{Normalized}} \\
        $n$ & \textbf{1000} & \textbf{2000} & \textbf{3000} & \textbf{ \%} & \textbf{1000} & \textbf{2000} & \textbf{3000} & \textbf{\%} \\
        \hline
        Dataset size (MB) & 210 & 430 & 641 & 100\% & 210 & 430 & 641 & 100\% \\ \hline
        MULISSE & 14 & 15 & 16 & 5\% & 16 & 18 & 20 & 5\% \\
        \kvmatch & 259 & 536 & 793 & 124\% & 10 & 20 & 30 & 5\% \\
        MASS & 315 & 646 & 964 & 150\% & 315 & 646 & 964 & 150\% \\
        \textbf{\name} & 968 & 1923 & 2861 & 452\% & 1094 & 2248 & 3348 & 522\% \\
        \dstree & 1491 & 3230 & 4773 & 731\% & 620 & 1276 & 1900 & 296\% \\
        \stindex & 2904 & 5771 & 8584 & 1355\% & 3282 & 6746 & 10045 & 1567\% \\
        \hline
    \end{tabular}
    }
\end{table}

\subsection{Evaluation Results}
\subsubsection{Initialization time}\label{sec:evaluation:index}
Figures~\ref{fig:row1}a-b show the initialization time of all index-based algorithms on the Stocks dataset as the number of MTS ($n$) increases. 
All methods scale linearly with $n$, which is expected since initialization of these methods requires iterating over all time series subsequences.
\name and \stindex have comparable initialization times, both primarily spent on computing DFT approximations. 
The remainder of the initialization cost relates to index construction which for  \stindex involves building channel-level trees (scaling linearly with the number of channels).
In contrast, \name builds a single tree for all channels, making it 2-3 times faster for this phase.
\dstree shows comparable initialization time to other methods for normalized subsequences, but is two orders of magnitude slower for raw subsequences. 
This difference occurs because raw subsequences with varying scales and means cause \dstree to create deep, unbalanced trees with costly node splitting operations, while normalized subsequences lead to more balanced trees based on pattern differences.
\rev{R1D7}{
    \kvmatch and MULISSE have the lowest initialization time. 
    This is because \kvmatch only requires a small lookup table for each time series, and MULISSE is originally build and parameterized for variable-length subsequence search, which adds an additional degree of freedom to the problem and makes indexing performance a key priority.
}

\subsubsection{Size of the data structures}\label{sec:evaluation:index_size}
Table~\ref{tab:index_size} presents the memory requirements of each algorithm for the Stocks dataset \rev{R2D8}{both in MB and as a percentage of the dataset size}.
\rev{R1D7}{All methods scale linearly with $n$, with MULISSE being the most space-efficient (storing only $n$ envelopes in a shallow iSAX tree), followed by \kvmatch, MASS, \name, \dstree, and \stindex.}
\name and \stindex have a worst-case space complexity of $O(n*(m_{\max} - \qlen) * f * c)$, with \stindex requiring approximately three times more space than \name due to its channel-level indices.
\rev{R2D8}{
    Reflecting on the relative memory requirements, we see that many methods require more than 100\% of the dataset size to store their indices.
    While this may sound counterintuitive, recall that these structures index the space of \emph{subsequences}, which is significantly larger than the dataset, given that subsequences overlap.
    Indicatively, while a $n=1000$ subset of Stocks is 210 MB large, storing all subsequences of length $\qlen=730$ would require 133 GB of memory, implying that \name already compresses this space by a factor of 138.
    Still, the memory footprint of subsequence indices may be a concern for some applications, and is a commonly acknowledged limitation of these methods~\cite{linardi18_ulisse,ucr_suite}.
}

\subsubsection{Query time}\label{sec:evaluation:query_time}
Figures~\ref{fig:row1}c-d present the query execution time for all methods, when executing queries on subsets of the time series in the Stocks dataset.
We see that \name significantly outperforms all other methods.
In fact, it outperforms its closest competitor, MASS, by over two orders of magnitude, and the other methods by over three orders of magnitude.
These results can be explained by the fact that \name adds pruning power to the MASS algorithm, which already offers a big performance boost over the Brute-force algorithm.
A deeper investigation revealed that \name reaches a median pruning effectiveness of 99\% across all datasets, i.e., 99\% of the subsequences are already pruned from the index and do not need to be compared with MASS.
\rev{R1D7}{
    While \stindex, \kvmatch, and \dstree also act as a pre-filter on top of MASS, their pruning power is significantly lower compared to \name, with pruning effectiveness of 52\%, 65\%, and 46\% respectively on average on the Stocks dataset with raw subsequences.
    This is because the pruning power of these methods depends on the relative ordering of the subsequences to the partial (per-channel) results.
}
When this relative ordering differed significantly over the different query channels, it resulted in relatively high thresholds for the per-channel distances (set in Lines 5-7 of Alg.~\ref{alg:baseline}), and therefore to a high number of MASS-based comparisons.
\name avoids this issue by querying all query channels simultaneously.
\rev{R1D7}{
    Lastly, MULISSE performs second-slowest (after Brute-force) with only 9\% pruning rate.
    This shows that while its extremely memory-efficient design achieves the smallest index size among all methods, this comes at the cost of loose distance bounds that result in many exhaustive comparisons.
}

\subsubsection{Query length}\label{sec:evaluation_query_length}
Since query length must be specified before index construction, we evaluate its impact on performance.
Figure~\ref{fig:row1}e shows that all algorithms except Brute-force are invariant to query length, as they use MASS which scales with subsequence length ($O(m \log m)$ with $m \geq \qlen$) rather than query length.
This suggests \name's performance is independent of $\qlen$, allowing users to choose lengths that best fit their use case.


\subsubsection{Different datasets}\label{sec:evaluation:datasets}
Figure~\ref{fig:datasets} presents the query execution performance across all datasets using default query parameters from Table~\ref{tab:datasets}.\footnote{\rev{R2D7}{The results on normalized subsequences are omitted as they show similar patterns.}}
For the 30 UEA datasets, we present average results as individual dataset results showed similar patterns.
\name consistently outperforms competitors by 2-3 orders of magnitude on all datasets, with the exception of the UEA datasets where \name's advantage is reduced to 5-7 times compared to MASS.
This is because UEA datasets capture short, single events (e.g., a person lifting their arm, or a duck producing a sound) rather than continuous long series with multiple events.
Subsequences in such datasets are typically more similar to each other than those in the Stocks or Weather datasets.
This creates more challenging queries for index-based methods, further discussed in Section~\ref{sec:evaluation:querynoise}.
\rev{R2D7}{
    Another notable observation is that \name's advantage over competitors remains substantial also on extremely long time series, with a 22x speedup over MASS and a $\sim$100x speedup over other methods on the Wind dataset.
    This is relevant, as such datasets are becoming increasingly common in domains such as IoT and sensor data~\cite{wind_dataset}.
}
All in all, we conclude that \name is significantly faster than all competitors across various datasets with different characteristics and different domains, confirming its robustness and general applicability.
    
\subsubsection{Effect of query difficulty}\label{sec:evaluation:querynoise}
Pruning-based algorithms rely on the "left-tail" assumption~\cite{aumuller21}: only a small fraction of the dataset is similar to the query.
Their performance depends on the \emph{relative contrast} between the query and indexed data -- the ratio between distances to the closest and farthest indexed entities~\cite{aggarwal01,ceccarello2025hardnesss}.
Following~\cite{azizi23_elpis,zoumpatianos18}, we test algorithm sensitivity by varying noise levels and using out-of-distribution (OOD) queries from held-out data.
Figure~\ref{fig:row2}a shows results for Stocks and Weather, revealing that query times are inversely proportional to relative contrast (grey bars on secondary y-axis), consistent with previous work~\cite{azizi23_elpis,zoumpatianos18,ceccarello2025hardnesss}.
Normalized subsequence queries have lower contrast than raw ones, making them more challenging.
For high-noise or OOD queries, \name's pruning power weakens to match MASS (plus tree traversal overhead), suggesting sequential scans are more appropriate as they don't rely on pruning.
This motivates future work on a hybrid approach that selects between \name and MASS based on estimated query difficulty.

\begin{figure*}[tb]
    \begin{subfigure}[t]{0.49\linewidth}
        \centering
        \includegraphics[width=\textwidth]{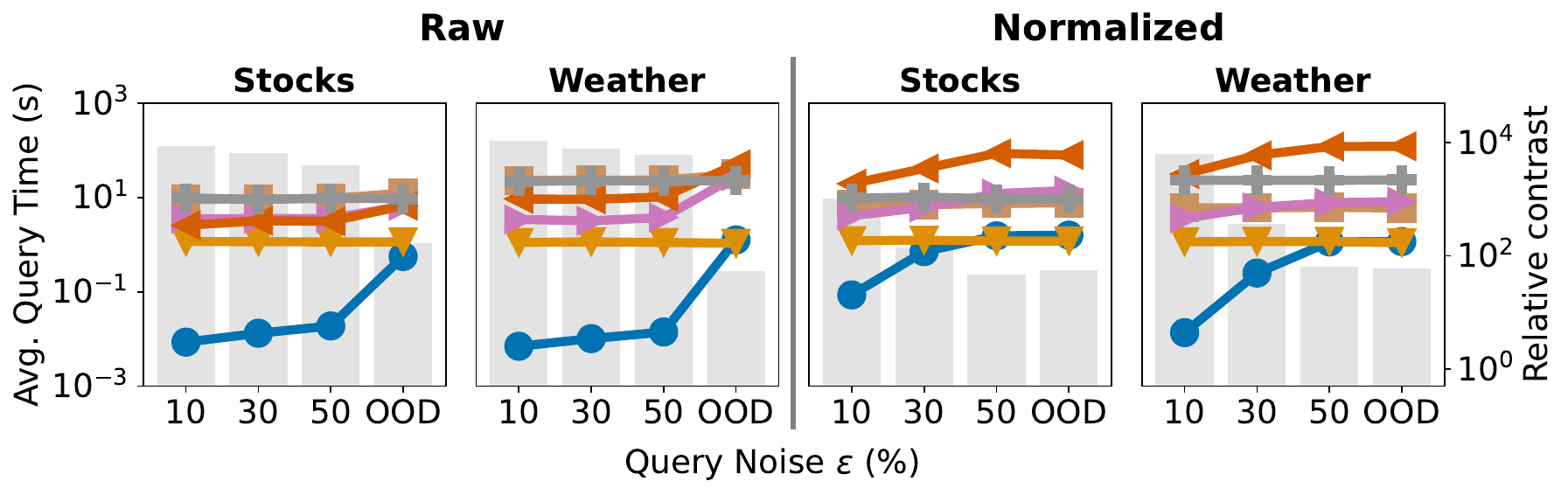}
    \end{subfigure}
    \begin{subfigure}[t]{0.249\linewidth}
        \centering
        \includegraphics[width=\textwidth]{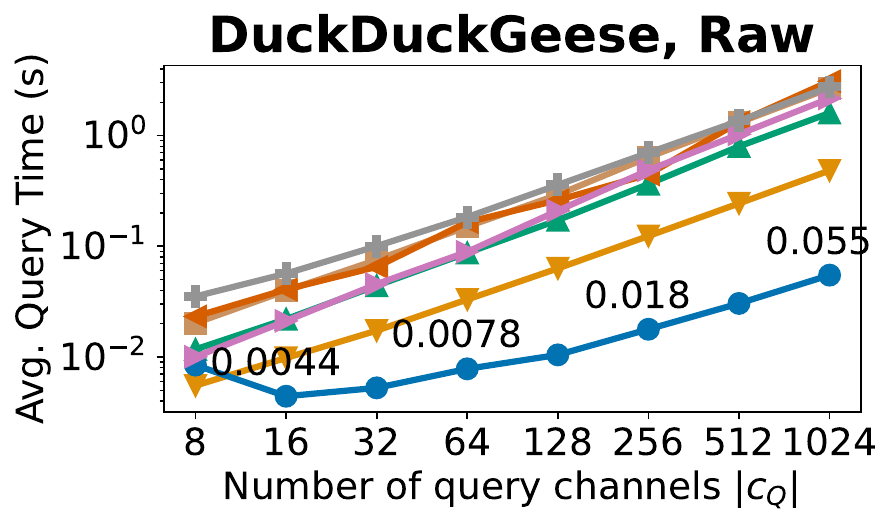}
    \end{subfigure}
    \begin{subfigure}[t]{0.249\linewidth}
        \centering
        \includegraphics[width=\textwidth]{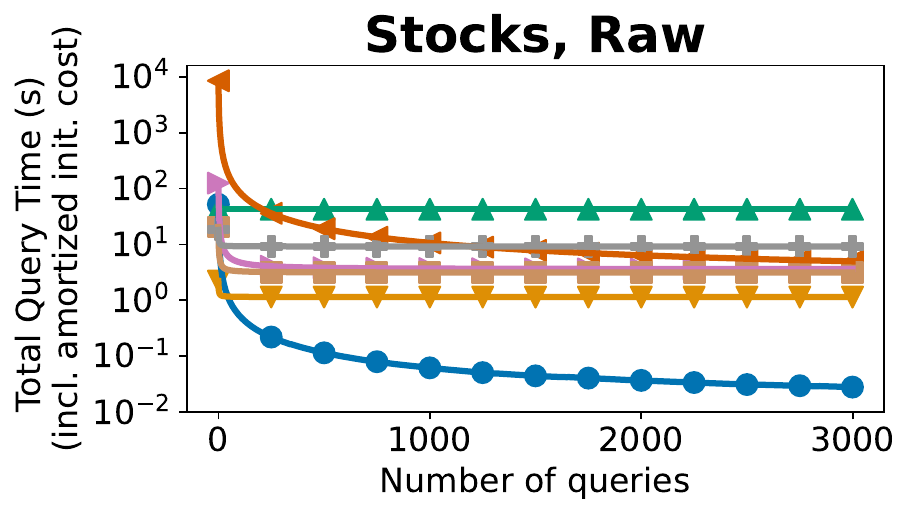}
    \end{subfigure}
    \begin{minipage}[b]{\linewidth}
        \centering
        \includegraphics[width=.7\textwidth]{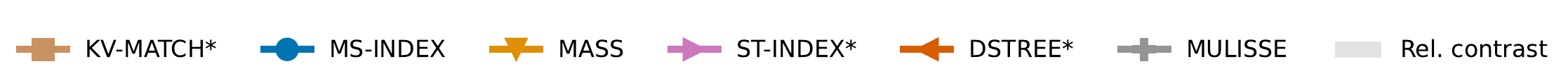}
    \end{minipage}
    \caption{Impact of query workload on algorithms. (a) Query time of algorithms with varying levels of query noise; (b) Query time over queried channels; (c) Amortized initialization time over queries. All y axes are presented in log scale.}
    \label{fig:row2}
        \centering
        \begin{subfigure}[t]{0.49\textwidth}
            \centering
            \includegraphics[width=\textwidth]{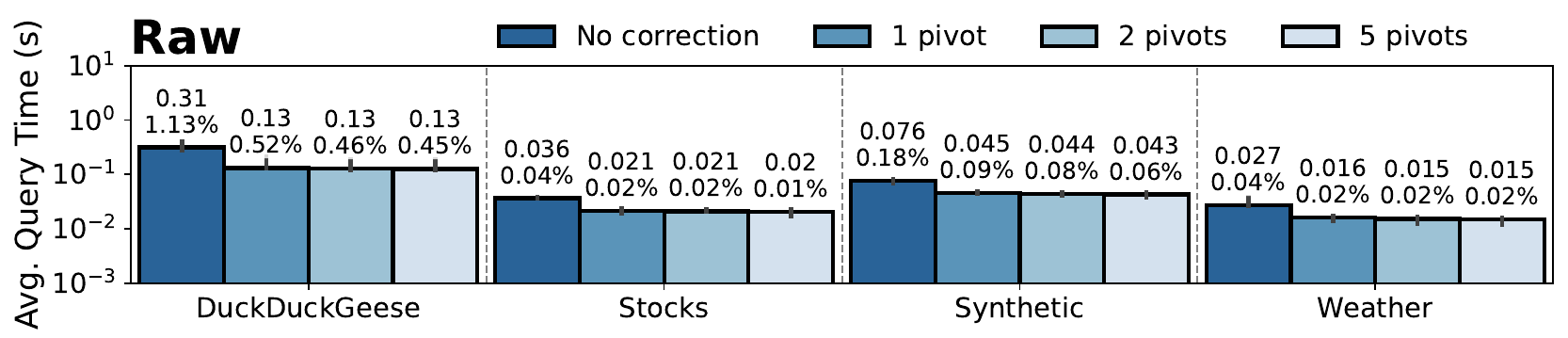}
        \end{subfigure}
        \begin{subfigure}[t]{0.49\textwidth}
            \centering
            \includegraphics[width=\textwidth]{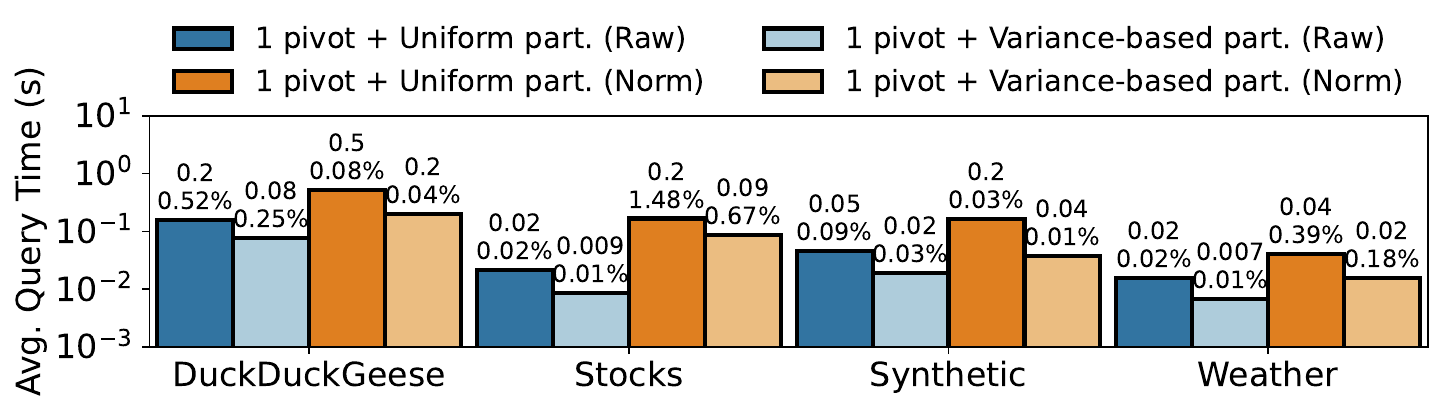}
        \end{subfigure}
        \caption{Impact analysis of the optimizations on \name: (a) Query time over the number of pivots in DFT-bound correction; (b) Query time of weighted partitioning vs. uniform partitioning.
        \rev{R1D8,R1D11}{Annotations indicate the average query time and the percentage of subsequences exhaustively considered (the rest is pruned).}}
        \label{fig:row3}
\end{figure*}

\begin{table}[tb]
    \centering
\caption{\rev{R1D10}{Relative contrast of queries and percentage of nodes pruned by \name, over different numbers of query channels on the DuckDuckGeese dataset (UEA).}}
\label{tab:pruning_channels}
\small
\begin{tabular}{ll|rrrr}
    \hline
    \multicolumn{2}{l|}{\textbf{Query channels} $|c_Q|$} & \textbf{16} & \textbf{64} & \textbf{256} & \textbf{1024} \\ 
    \hline
    \multirow{2}{*}{Rel. contrast} & Raw & 62 & 107 & 110 & 120 \\
    & Normalized & 14.5 & 14.4 & 13.9 & 13.9 \\
    \hline
    \multirow{2}{*}{Perc. pruned} & Raw & 70.54\% & 79.36\% & 92.20\% & 97.94\% \\
    & Normalized & 89.58\% & 84.66\% & 83.68\% & 83.61\% \\
    \hline
\end{tabular}
\end{table}

\subsubsection{Number of query channels}\label{sec:evaluation:query_channels}
We investigate algorithm sensitivity to the number of query channels using the \emph{DuckDuckGeese} dataset from UEA, containing 1024 channels. 
Figure~\ref{fig:row2}b shows \name's query time scales sublinearly with the number of channels, while other methods scale linearly.
This improvement comes from increased pruning power outweighing the cost of additional distance computations.
For instance, query time decreases from 8.4 to 4.4 ms between 8-16 channels, then grows slowly as marginal selectivity diminishes.
Table~\ref{tab:pruning_channels} confirms that \name prunes 70.5-97.9\% of R-tree nodes as channels increase from 16 to 1024 when querying for raw subsequences, demonstrating effective use of multivariate information.\footnote{The pruning percentage of nodes is not the same as the pruning power (the ratio between number of considered subsequences over all subsequences, around 99.9\% for \name), since many nodes in the tree contain groups of subsequences.}
In contrast, \dstree, \stindex, and \kvmatch suffer with more query channels as their union of per-channel results grows almost linearly.
\rev{R1D10}{
    For normalized subsequences, \name maintains superior performance but experiences a decreasing pruning power due to the curse of dimensionality -- as the number of channels increases, distances between points become more uniform, reducing discriminative power~\cite{rstar,marimont79}.
    This effect is less pronounced for raw subsequences, where the scale of values in a channel can still provide additional discriminative information that can help in pruning.
    The relative contrast values in Table~\ref{tab:pruning_channels} confirm this: increasing with $|c_Q|$ for raw subsequences while slightly decreasing for normalized ones.
}

\subsubsection{Amortizing the initialization cost}\label{sec:evaluation:amortized}
Since algorithms have different initialization costs, their relative efficiency depends on the number of queries executed.
Figure~\ref{fig:row2}c shows the amortized total cost (initialization + query time) up to 3000 queries.
While MASS is fastest for few queries due to its low initialization cost, \name becomes more efficient after just 45 queries, with its advantage growing with more queries due to its superior query performance.

\subsubsection{Effect of DFT bound correction optimization}\label{sec:evaluation:dft_bound}
Recall from Section~\ref{sec:optimizations} that we tighten the distance bound using a triangle-inequality-based correction term, enabling more pruning at the cost of added initialization overhead.
We evaluate this correction's impact on both initialization and query times with varying number of pivots.
Figure~\ref{fig:row3}a shows that adding just a single pivot improves query performance by a factor of 2, while additional pivots yield only marginal improvements.
\rev{R1D9}{
    This is known as the \emph{coverage saturation effect}, where the effectiveness of pivot-based bounding stabilizes as the space becomes increasingly well-covered with more pivots~\cite{chavez01,baeza1994proximity}.
    In our case, diminishing returns after the first pivot indicate that the space of remainders (i.e., the high-frequency DFT coefficients) has low complexity: the data is concentrated around a single point.
    This observation is not surprising, as -- by definition -- these remainders comprise lower-energy frequencies that mostly represent noise following a normal distribution~\cite{kay93}.
}
Nevertheless, the improvement over not correcting shows that the remainders still contain valuable information.
Our analysis shows pivot comparisons increase initialization time by 15\%, 30\%, and 75\% for 1, 2, and 5 pivots respectively.
Given the query time improvements, a single pivot provides the best cost-benefit tradeoff.

\subsubsection{Effect of the optimization for tightening the MBRs}\label{sec:evaluation:partitioning}
We evaluate the optimization for tightening MBRs via weighted R-tree partitioning (Section~\ref{sec:optimizations}) against uniform partitioning.
The results in Figure~\ref{fig:row3}b show that weighted partitioning improves query time by 1.5-3x by allowing the R-tree to better adapt to dataset characteristics, with larger gains on high-dimensional datasets like DuckDuckGeese and Synthetic.
From these results we can conclude that the weighted partitioning strategy is beneficial.
\section{CONCLUSIONS}\label{sec:conclusions}
We considered the problem of \rev{R1D5}{fixed-length subsequence search on MTS}, and proposed \name, an \rev{R1D1}{exact} algorithm that adaptively partitions and indexes a DFT-based feature space, allowing for a cheap pruning of over 99\% of the subsequences.
Our evaluation with 34 datasets demonstrated that \name outperforms the state-of-the-art by two orders of magnitude for both raw and normalized subsequences, and that it scales sublinearly with the amount of query channels.
\begin{acks}
    This work has received funding from the Horizon Europe research and innovation programmes STELAR~(101070122), AI4Europe (10107\-0000), TwinODIS (101160009), ARMADA (101168951), DataGEMS (101188416) RECITALS (101168490), and by $Y \Pi AI \Theta A$ \& NextGenerationEU project HARSH ($Y\Pi 3TA-0560901$).
\end{acks}

\balance
\bibliographystyle{ACM-Reference-Format}
\bibliography{references}

\end{document}